\theoremstyle{plain}
\newtheorem{theorem}{Theorem}[section]
\theoremstyle{definition}
\theoremstyle{remark}
\newcommand{\R}{\mathbb{R}}
\icmltitlerunning{Controlling Behavioral Diversity in Multi-Agent Reinforcement Learning}
\begin{document}

\twocolumn[
\icmltitle{Controlling Behavioral Diversity in Multi-Agent Reinforcement Learning}



\icmlsetsymbol{equal}{*}

\begin{icmlauthorlist}
\icmlauthor{Matteo Bettini}{cam}
\icmlauthor{Ryan Kortvelesy}{cam}
\icmlauthor{Amanda Prorok}{cam}

\end{icmlauthorlist}

\icmlaffiliation{cam}{Department of Computer Science, University of Cambridge, Cambridge, UK}

\icmlcorrespondingauthor{Matteo Bettini}{mb2389@cl.cam.ac.uk}

\icmlkeywords{Multi-Agent Reinforcement Learning, Diversity, Control}

\vskip 0.3in
]



\printAffiliationsAndNotice{}  

\begin{abstract}
The study of behavioral diversity in Multi-Agent Reinforcement Learning (MARL) is a nascent yet promising field.
In this context, the present work deals with the question of how to \emph{control} the diversity of a multi-agent system. 
With no existing approaches to control diversity to a set value, current solutions focus on blindly promoting it via intrinsic rewards or additional loss functions, effectively changing the learning objective and lacking a principled measure for it.
To address this, we introduce Diversity Control (DiCo), a method able to control diversity to an exact value of a given metric by representing policies as the sum of a parameter-shared component and dynamically scaled per-agent components.  
By applying constraints directly to the policy architecture, DiCo leaves the learning objective unchanged, enabling its applicability to any actor-critic MARL algorithm.
We theoretically prove that DiCo achieves the desired diversity, and we provide several experiments, both in cooperative and competitive tasks, that show how DiCo can be employed as a novel paradigm to increase performance and sample efficiency in MARL.
Multimedia results are available on the paper's \href{\website}{website}\footnote{\url{\website}\label{link:experiments}}.
\end{abstract}

\section{Introduction}

Diversity is key to collective intelligence~\cite{woolley2015collective} and commonplace in natural systems~\cite{kellert1997value}. 
Just as biologists and ecologists have demonstrated the role of functional diversity in ecosystem survival~\cite{cadotte2011beyond}, it has also been shown to provide important performance benefits in Multi-Agent Reinforcement Learning (MARL)~\cite{bettini2023hetgppo, chenghao2021celebrating}.
Despite this, methods that leverage diversity in MARL are understudied, with the field still in its infancy. In particular, one question that remains unanswered is how to control a system's diversity to an exact, quantified value. This is the focus of this work.

Behavioral diversity in MARL is intrinsically tied to the concept of policy parameter sharing~\cite{christianos2021scaling}. 
When agents share policy parameters, they obtain higher sample efficiency, but learn a single homogeneous policy. When agents do not share parameters, they are able to learn heterogeneous policies, but achieve lower sample efficiency.
Following the latter paradigm, several methods have been proposed to promote behavioral diversity, showing its utility in many MARL problems~\cite{jiang2021emergence, chenghao2021celebrating,mahajan2019maven,wang2020roma}. 
These approaches aim to boost diversity by employing information theoretical objectives, such as additional intrinsic rewards or secondary losses.
In doing so, they effectively change the learning objective, without being able to measure the resulting diversity.
In contrast, rather than than boosting diversity, we propose a novel problem formulation that aims to \textit{control diversity to the exact value of a given metric}, leaving the learning objective unchanged.
This paradigm opens new avenues for the control and study of diversity in MARL, presenting a novel tool that researchers can use to constrain agents to different diversity levels, aiding in the discovery of emergent strategies that leverage the benefits of diversity.

We propose Diversity Control (DiCo), the first method to control behavioral diversity in MARL to a desired value, applicable to any actor-critic algorithm with stochastic or deterministic continuous actions.
Bypassing issues (mentioned above) that change the structure of the learning objective, DiCo finds a new approach to diversity control.
It represents policies as the sum of a parameter-shared (homogeneous) component and individual (heterogeneous) components, which are dynamically scaled according the the current and desired value of a given diversity metric.
We provide theoretical proofs that DiCo achieves the desired diversity and demonstrate it empirically in a didactic case-study (i.e., \textit{Multi-Agent Navigation}).
Whereas previous works lacked tools to inspect the learned policies, we present novel visualizations of the policies' diversity distribution, which enable to analyze how the agent policies learn to distribute the diversity dictated by DiCo over the observation space.

To showcase some possible applications of DiCo, we run experiments in four cooperative and competitive MARL tasks, comparing DiCo agents (constrained to different levels of diversity) to homogeneous and unconstrained heterogeneous agents.
Our experiments show that, by constraining the policy search space, DiCo can achieve higher performance, exploration, and sample efficiency.
Via trajectory analyses of the learned DiCo policies, we show how different diversity constraints can lead to the discovery of multiple novel emergent strategies in various tasks.
Multimedia experiment results are available on the paper's \href{\website}{website}\footref{link:experiments}.

Our contributions are as follows:
\begin{itemize}
\setlength\itemsep{0.05pt}
    \item DiCo, the first method to control behavioral diversity in MARL to a set value of a given diversity metric, complemented with theoretical proofs and empirical demonstrations of achieving the desired diversity;
    \item Novel visualization tools to investigate the policies' diversity distribution, that showcase how agents distribute the diversity dictated by DiCo over the observation space;
    \item A set of MARL experiments in cooperative and competitive tasks, demonstrating several benefits of the DiCo method and the emergence of novel strategies. In particular, we gather the following insights:
    \begin{itemize}
        \item In tasks that benefit from diversity, where unconstrained heterogeneous policies learn a suboptimal low-diversity strategy, we show how DiCo can be employed to find better-performing and more diverse strategies faster;
        \item In tasks that benefit from homogeneity, where unconstrained heterogeneous policies prove less sample efficient and too diverse, we show how DiCo can find less-diverse optimal policies faster;
        \item We show how DiCo, by constraining the search space of heterogeneous policies, can be used to find novel emergent strategies.
    \end{itemize}
\end{itemize}

\section{Related Works}
Diversity in MARL has recently gained increasing attention.
In particular, several works have outlined the caveats and implications of parameter sharing across agent policies~\cite{christianos2021scaling,fu2022revisiting,bettini2023hetgppo}. 
These works show that, despite homogeneous policies could emulate diverse behavior based on the input context, parameter sharing can impede the learning of diverse behavioral roles, prove brittle with respect to noise, or completely prevent success in certain tasks. 
In the following paragraphs, we review existing approaches to promote behavioral diversity in MARL. Additional related works, including diversity in population-based RL, are presented in \autoref{app:related}.

\textbf{Diversity via Intrinsic Reward.}
A common solution to promote diversity among agents in a system is to design an intrinsic reward that is added to the task reward, creating an auxiliary objective for the agents.
\citet{jaques2019social} introduce an intrinsic reward based on mutual information between the agents' actions in order to promote social influence.
\citet{wang2019influence} propose another influence-based reward structure to promote exploration, additionally considering the influence on other agents' transition and reward functions.
\citet{jiang2021emergence} introduce an intrinsic reward based on the mutual information between an agent’s identity and observation.
Lastly, \citet{chenghao2021celebrating} consider agents equipped with both shared and individual networks, that are optimized using an intrinsic reward based on the mutual information between an agent’s identity and its trajectory.
While intrinsic motivation via additional rewards constitutes a powerful diversity boosting mechanism, it could introduce learning issues due to its interaction with the primary task reward and, thus, require extensive re-tuning when applied to new tasks. Furthermore, it modifies the RL objective, which can impact the optimality of the learned policy in the original problem. For these reasons, we do not consider intrinsic reward structures in our diversity control paradigm.

\textbf{Diversity via Objective Function.}
Another solution to promote diversity considers designing an auxiliary loss as a diversity booster.
In MAVEN \cite{mahajan2019maven} agents condition their behavior on a shared latent variable controlled by a hierarchical policy. MAVEN's objective then maximizes mutual information between the trajectories and latent variables to learn a diverse set of behaviors.
ROMA \cite{wang2020roma} introduces a diversity regularizer to incentivize learning specialized roles in cooperative tasks.
The addition of a secondary objective, however, can impact the main learning objective, requiring careful task-dependent tuning. In contrast, our approach does not employ additional objective functions to control diversity.

All the approaches presented in this section devise additional objectives to promote diversity across agents in a system, either in the form of an intrinsic reward term or an auxiliary loss. Furthermore, most works focus on discrete action spaces, where it is not possible to quantify action diversity using a continuous metric.
In contrast, our work enables controlling diversity in continuous action spaces to the exact desired value of a given diversity metric by architecturally constraining the agents' policy networks, thus obviating the need for additional diversity-based objectives or rewards.

\section{Background}
In this section, we introduce the task formulation and the behavioral diversity metric that will be utilized in our problem formulation and method.

\textbf{Partially Observable Markov Games.} A Partially Observable Markov Game (POMG)~\cite{shapley1953stochastic} is defined as a tuple
$$\left \langle \mathcal{N}, \mathcal{S}, \left \{ \mathcal{O}_i \right \}_{i \in \mathcal{N}}, \left \{ \sigma_i \right \}_{i \in \mathcal{N}},  \left \{ \mathcal{A}_i \right \}_{i \in \mathcal{N}}, \left \{ \mathcal{R}_i \right \}_{i \in \mathcal{N}}, \mathcal{T}, \gamma \right \rangle,$$
where $\mathcal{N} = \{1,\ldots, n\}$ denotes the set of agents,
$\mathcal{S}$ is the state space, and,
$\left \{ \mathcal{O}_i \right \}_{i \in \mathcal{N}}$ and
$\left \{ \mathcal{A}_i \right \}_{i \in \mathcal{N}}$
are the observation and action spaces, with $\mathcal{O}_i \subseteq \mathcal{S}, \; \forall i \in \mathcal{N}$. 
Further, $\left \{ \sigma_i \right \}_{i \in \mathcal{N}}$ 
and
$\left \{ \mathcal{R}_i \right \}_{i \in \mathcal{N}}$
are the agent observation and reward functions (potentially identical for all agents\footnote{In this work we consider also problems represented as Dec-POMDPs~\cite{bernstein2002complexity}, which are a particular subclass of POMGs with one shared reward function.}), such that
$\sigma_i : \mathcal{S} \mapsto \mathcal{O}_i$, and,
$\mathcal{R}_i: \mathcal{S} \times \left \{ \mathcal{A}_i \right \}_{i \in \mathcal{N}} \times \mathcal{S} \mapsto \R$.
$\mathcal{T}$ is the stochastic state transition model, defined as $\mathcal{T} : \mathcal{S} \times \left \{ \mathcal{A}_i \right \}_{i \in \mathcal{N}}   \mapsto  \Delta\mathcal{S}$, which outputs the probability $\mathcal{T}(s^t, \left \{ a^t_i \right \}_{i \in \mathcal{N}},s^{t+1})$ of transitioning to state $s^{t+1} \in \mathcal{S}$ given the current state $s^t \in \mathcal{S}$ and actions $\left \{ a^t_i \right \}_{i \in \mathcal{N}}$, with $a^t_i \in \mathcal{A}_i$. $\gamma$ is the discount factor.
Agents are equipped with (possibly stochastic) policies $\pi_{i}(a_i|o_i)$, which compute an action given a local observation. 

\textbf{System Neural Diversity.}
Several measures have been proposed to quantify behavioral diversity among agent policies~\cite{mckee2022quantifying,yu2021informative,hu2022policy,bettini2023snd}.
In this work, we employ System Neural Diversity (SND)~\cite{bettini2023snd} as it is the only metric that can be computed in closed-form between the continuous action distributions provided as output by the policies. While the theorem presented in this work leverages the properties of this metric, other metrics could equivalently be used as long as the main theorem results are proven to hold. 
In \autoref{app:general_metric} we: (1) present a theorem that defines the properties that a diversity metric needs to satisfy in order to be used in DiCo, and (2) provide further motivation for the use of SND in this work.
SND is a behavioral diversity metric that measures heterogeneity in a multi-agent system. It is computed in two phases. First, the Wasserstein statistical metric~\cite{vaserstein1969markov} is used to measure the diversity between two agents' policies over a set $O$ of observations (generated via rollouts): $d(\pi_i,\pi_j) = \frac{1}{|O|}\sum_{o\in O }W_2(\pi_i(o),\pi_j(o))$. This distance can be computed in closed-form for deterministic policies and for stochastic policies outputting multivariate normal distributions. Pairwise behavioral distances are then aggregated in a system-level metric by taking the mean over agent pairs 
$\mathrm{SND}(\left \{ \pi_i \right \}_{i \in \mathcal{N}}) = \frac{2\sum_{i=1}^n\sum_{j=i+1}^n d(\pi_i,\pi_j)}{n(n-1)}$, leading to the following expression for $\mathrm{SND}$:
\begin{multline}
    \mathrm{SND}(\left \{ \pi_i \right \}_{i \in \mathcal{N}}) =\\ \frac{2}{n(n-1)|O|}\sum_{i=1}^n\sum_{j=i+1}^n \sum_{o\in O }W_2(\pi_i(o),\pi_j(o)).
    \label{eq:snd}
\end{multline}

The choice of $\mathrm{SND}$ over alternative measures is motivated by several desirable properties. Firstly, by using $W_2$, SND presents all the properties of a statistical metric~\cite{menger2003statistical}, which would not hold for diversity measures based on statistical divergences (\textit{e.g.}, Kullback–Leibler divergence). Secondly, it provides \textit{invariance in the number of equidistant agents}, meaning that the measured diversity is not impacted by the system size when all agents are behaviorally equidistant. This allow to devise diversity control inputs that are independent of the number of agents.

\section{Problem Formulation}
The goal of this work is to control the agents' behavioral diversity $\mathrm{SND}(\left\{\pi_i\right\}_{i\in\mathcal{N}})$ given a desired input diversity $\mathrm{SND}_\mathrm{des}$.
In other words, given a POMG representing a multi-agent task, we are interested in learning a set of policies $\left\{\pi_i\right\}_{i\in\mathcal{N}}$ such that:
\begin{equation}
    \mathrm{SND}(\left\{\pi_i\right\}_{i\in\mathcal{N}}) = \mathrm{SND}_\mathrm{des}.
\end{equation}
To train the policies, we consider actor-critic algorithms (both on and off policy), using the Centralized Training Decentralized Execution (CTDE) MARL paradigm~\cite{lowe2017multi}, where agents are trained using global information and then deployed with independent local policies.  We consider policies represented either as deterministic functions $\pi_i = [\mu_i]$ or as stochastic normal distributions $\pi_i = [\mu_i,\sigma_i]$ with $\mu_i,\sigma_i \in \R^m$. 

\section{Method}
\begin{figure*}[t]
\begin{center}
\centerline{\includegraphics[width=\textwidth]{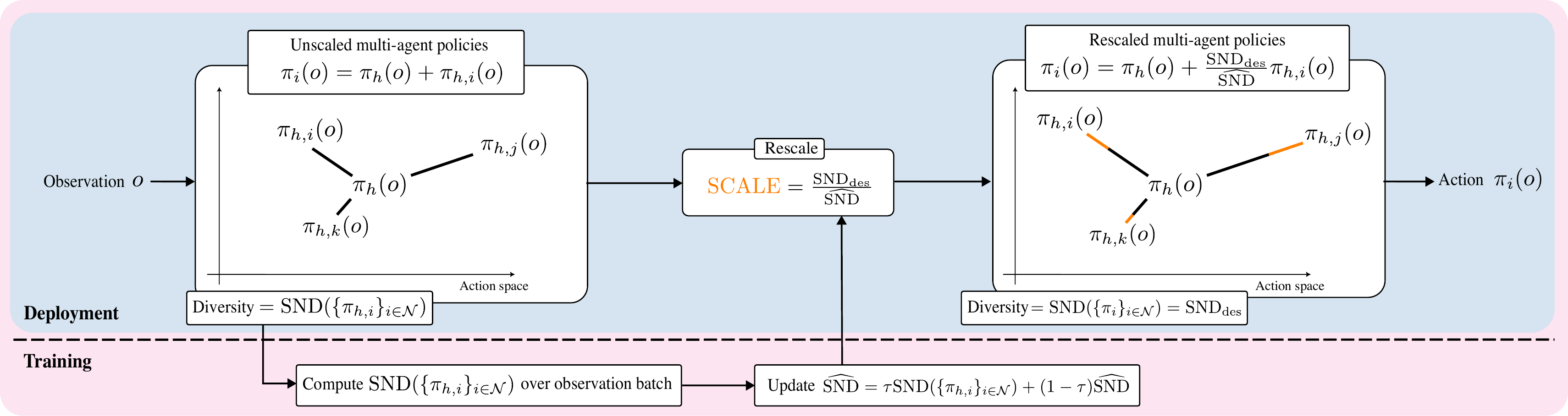}}
\caption{DiCo architecture overview. Multi-agent policies are rescaled to match the desired behavioral diversity $\mathrm{SND}_\mathrm{des}$. The scaling factor is computed as the desired diversity divided by the actual diversity of the unscaled policies, which is updated during training. This process is described in \autoref{alg:cbd}.}
\label{fig:cbd_overview}
\end{center}
\end{figure*}

In this section, we introduce the Diversity Control (DiCo) framework (\autoref{fig:cbd_overview}).
DiCo controls the behavioral diversity of a multi-agent system to a desired value by constraining the agent policies, thereby not requiring any additional intrinsic reward or diversity objective. 
By consisting only of architectural constraints applied to the actor network, it can be utilized with any actor-critic MARL algorithm with  (deterministic or stochastic) continuous actions. 
The DiCo pseudocode is reported in \autoref{app:cbd_pseudocode}.

\subsection{Representing Policies as Heterogeneous Deviations From A Homogeneous Reference}

Multi-agent policies $\pi_i$ have been traditionally represented either as parameter-shared networks or as completely independent functions. Parameter sharing
forces all policies to be identical and thus conditioned on the same set of parameters. This improves the training sample efficiency, but forces behavioral homogeneity (i.e. $\mathrm{SND}(\left \{ \pi_i \right \}_{i \in \mathcal{N}})=0$). On the other hand, completely independent networks can learn heterogeneous functions, but are less sample efficient, as $n$ different policies need to be learned instead of one. 
To leverage the benefits of both approaches and to apply the structural heterogeneity constraint proposed in this work, we consider policies represented by the sum of a \textit{homogeneous parameter-shared component} $\pi_h(o)$ and a \textit{heterogeneous per-agent deviation} $\pi_{h,i}(o)$:
\begin{equation}
    \label{eq:unscaled_policies}
    \pi_i(o) = \pi_h(o) + \pi_{h,i}(o),
\end{equation}
where $\pi_h$ is parameterized by shared parameters $\theta_h$ and $\pi_{h,i}$ is parameterized by per-agent parameters $\theta_{h,i}$. Note that this sum is computed over the action distribution parameters ($\pi_i = [\mu_i,\sigma_i]$), and not the sampled actions. 

By representing multi-agent policies in this manner, we can leverage the sample efficiency benefits of parameter sharing, while avoiding its behavioral homogeneity constraints. 

\subsection{Constraining Heterogeneous Policies via Rescaling}
\label{sec:scaling}
The core idea presented in this work is that of rescaling multi-agent policies (represented as heterogeneous deviations
from a homogeneous reference) to achieve the desired behavioral diversity. To do so we: (1) normalize the agent policy deviations by their diversity prior to rescaling $\widehat{\mathrm{SND}} \coloneqq \mathrm{SND}(\{\pi_{h,i}\}_{i\in\mathcal{N}})$ and (2) multiply them by the desired diversity $\mathrm{SND}_\mathrm{des}$. This yields the following formulation for the agent policies:
\begin{equation}
    \label{eq:scaled_policies}
    \pi_i(o) = \pi_h(o) + \frac{\mathrm{SND}_\mathrm{des}}{\widehat{\mathrm{SND}}}\pi_{h,i}(o).
\end{equation}

\begin{theorem}[Controlling diversity by rescaling agent policies]
\label{thm:scaling}
Given a set of multi-agent policies $\{\pi_i\}_{i\in\mathcal{N}}$ of the form presented in \autoref{eq:scaled_policies} and a desired diversity $\mathrm{SND}_\mathrm{des}$, then
the diversity of the policies $\{\pi_{i}\}_{i\in\mathcal{N}}$ is equal to the desired value: $ \mathrm{SND}(\{\pi_{i}\}_{i\in\mathcal{N}}) = \mathrm{SND}_\mathrm{des}$.
\end{theorem}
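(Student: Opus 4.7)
The plan is to exploit two structural properties of the Wasserstein-2 metric as applied to the policy parameterizations considered in this paper: translation invariance of the mean component and positive homogeneity (linear scaling) with respect to the parameters. Since the $\pi_h$ term is identical across all agents and is simply added to each $\pi_{h,i}$ at the parameter level, it acts as a common translation that cancels out inside every pairwise $W_2$ distance. What remains inside each $W_2(\pi_i(o),\pi_j(o))$ is a pairwise distance between the \emph{scaled} heterogeneous deviations, which by positive homogeneity equals the constant factor $\mathrm{SND}_\mathrm{des}/\widehat{\mathrm{SND}}$ times the corresponding distance between the unscaled deviations $\pi_{h,i}(o)$ and $\pi_{h,j}(o)$.

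Concretely, I would first verify the two required $W_2$ identities for each of the two policy classes allowed by the paper. For deterministic policies $\pi_i=[\mu_i]$ the $W_2$ distance between Dirac masses reduces to the Euclidean norm of the mean difference, so translation invariance and absolute-homogeneity are immediate from the corresponding properties of $\|\cdot\|$. For stochastic Gaussian policies $\pi_i=[\mu_i,\sigma_i]$ with diagonal covariance, the closed-form expression is $W_2(\pi_i(o),\pi_j(o))=\sqrt{\|\mu_i-\mu_j\|^2+\|\sigma_i-\sigma_j\|^2}$; since the sum in \autoref{eq:scaled_policies} is taken componentwise over the parameter vector $[\mu,\sigma]$, adding $\pi_h(o)$ to both arguments cancels, and scaling both heterogeneous components by $\mathrm{SND}_\mathrm{des}/\widehat{\mathrm{SND}}>0$ factors out of the square root. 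In both cases we obtain the key pointwise identity
\begin{equation}
W_2\bigl(\pi_i(o),\pi_j(o)\bigr)=\frac{\mathrm{SND}_\mathrm{des}}{\widehat{\mathrm{SND}}}\,W_2\bigl(\pi_{h,i}(o),\pi_{h,j}(o)\bigr).
\end{equation}

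With this identity in hand, the rest is algebra: substitute the right-hand side into the definition of $\mathrm{SND}$ in \autoref{eq:snd}, pull the constant scaling factor outside the double sum over pairs and the sum over observations, and recognize what is left as exactly the definition of $\widehat{\mathrm{SND}}=\mathrm{SND}(\{\pi_{h,i}\}_{i\in\mathcal{N}})$. The two occurrences of $\widehat{\mathrm{SND}}$ cancel and we are left with $\mathrm{SND}(\{\pi_i\}_{i\in\mathcal{N}})=\mathrm{SND}_\mathrm{des}$, as required.

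The main obstacle is the subtlety that the addition in \autoref{eq:unscaled_policies} and the rescaling in \autoref{eq:scaled_policies} act on the \emph{distribution parameters} and not on the sampled actions or the probability measures themselves; a naive application of translation/scaling properties of $W_2$ to probability measures would not be correct for the $\sigma$ component (scaling a random variable by $c$ rescales its standard deviation by $|c|$, but the paper instead rescales $\sigma_{h,i}$ additively around $\sigma_h$). This is why the argument must be carried out at the level of the closed-form parameter expression for Gaussians rather than through generic measure-theoretic scaling laws, and it is also why the result is tied to the specific choice of SND (built on $W_2$ with a closed form on the parameter space) rather than an arbitrary diversity metric; the authors' reference to \autoref{app:general_metric} presumably formalizes the minimal properties a metric must satisfy for this argument to go through.
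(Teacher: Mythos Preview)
Your proposal is correct and matches the paper's approach: both exploit that the closed-form $W_2$ on the parameter space is translation-invariant (so the common $\pi_h$ cancels) and positively homogeneous (so the factor $\mathrm{SND}_\mathrm{des}/\widehat{\mathrm{SND}}$ pulls outside), then cancel against the definition of $\widehat{\mathrm{SND}}$. The only cosmetic difference is that the paper splits the Gaussian case into two sub-cases (standard deviation produced entirely by $\pi_h$, or entirely by $\pi_{h,i}$) and works through the full trace formula, whereas your single diagonal identity $W_2=\sqrt{\|\mu_i-\mu_j\|^2+\|\sigma_i-\sigma_j\|^2}$ handles both sub-cases---and in fact the more general situation with nonzero $\sigma_h$ \emph{and} $\sigma_{h,i}$---in one line.
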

\begin{proof} 
We provide proofs for three policy types: (1) deterministic policies $\pi_h(o) = [\mu_{h}(o)]$, $\pi_{h,i}(o) = [\mu_{h,i}(o)]$, (2) policies producing multivariate normal distributions with a homogeneous standard deviation $\pi_h(o) = [\mu_{h}(o),\sigma_h(o)]$, $\pi_{h,i}(o) = [\mu_{h,i}(o),0]$, (3) policies producing multivariate normal distributions with a heterogeneous standard deviation $\pi_h(o) = [\mu_{h}(o),0]$, $\pi_{h,i}(o) = [\mu_{h,i}(o),\sigma_{h,i}(o)]$. The proofs are reported in \autoref{sec:proofs}.
\end{proof}

\autoref{thm:scaling} states that by rescaling multi-agent policies according to \autoref{eq:scaled_policies}, we are guaranteed to achieve the desired diversity $\mathrm{SND}_\mathrm{des}$. Note that in the special case where agents are controlled to be homogeneous ($\mathrm{SND}_\mathrm{des}=0$), the policy formulation becomes a simple parameter-shared network $\pi_i(o) = \pi_h(o)$.

\textbf{Computing $\widehat{\mathrm{SND}}$.}
In order to compute the scaling factor, we need to measure the SND of the policies prior to rescaling:
\begin{align*}
   \widehat{\mathrm{SND}}& \coloneqq  \mathrm{SND}(\left \{ \pi_{h,i} \right \}_{i \in \mathcal{N}})\\ 
   & = \frac{2}{n(n-1)|O|}\sum_{i=1}^n\sum_{j=i+1}^n \sum_{o\in O }W_2(\pi_{h,i}(o),\pi_{h,j}(o)).
\end{align*}
Since this quantity is dependent on the learned policy deviations $\pi_{h,i}$, it needs to be computed only at training time for every policy update.
In doing so, the choice of the observation set $O$, where the diversity is evaluated, plays an important role. $O$ needs to be large enough to accurately represent the observation distribution seen by the agents, while remaining small enough to avoid unnecessary computational overhead.
We construct $O$ from the training data batches\footnote{This is the data sampled from the replay buffer in off-policy algorithms or the training SGD minibatches in on-policy algorithms.}.  The set $O$ is created by unifying the observations from all agents and all timesteps in a batch. This grants an unbiased estimation of the observation distribution given the current policies. To reduce the variance of this estimation, and to stabilize the diversity updates, we further employ a soft update mechanism:
$$
\widehat{\mathrm{SND}} = \tau\mathrm{SND}(\left \{ \pi_{h,i} \right \}_{i \in \mathcal{N}}) +(1-\tau)\widehat{\mathrm{SND}},
$$
where $\tau \in [0,1]\subset\R$ regulates the reliance on the diversity measured from the most recent batch. In \autoref{app:tau_comparison}, we perform an empirical evaluation for different values of $\tau$ to confirm this claim.

\textbf{Determining $\mathrm{SND}_\mathrm{des}$.}
The desired diversity $\mathrm{SND}_\mathrm{des}$ is the main control input required for this method.
Its values represents the desired average Wasserstein distance among agent policy pairs.
While in traditional methods the choice of sharing policy parameters allows a binary decision between a diversity of zero and an unconstrained diversity, $\mathrm{SND}_\mathrm{des}$ allows to constrain the agent policy networks to any specified diversity.
By constraining the policies at a given diversity, the search space of learnable policies is significantly reduced, aiding in the search of the optimal set.
There are a variety of ways to find the value of $\mathrm{SND}_\mathrm{des}$ that will optimize a given problem.
It could be determined in an outer optimization loop, where training is iteratively performed at different diversity levels to optimize a given metric of interest (\textit{e.g.}, performance, resilience)---\autoref{app:closed_control} illustrates an algorithm implementing this closed-loop paradigm. It can also be utilized as a human input prior, decided depending on the multi-agent task. Example applications include: boosting exploration, avoiding local minima in the policy search, or simply inspecting emergent strategies at different diversity levels. In \autoref{sec:experiments} we provide experiments highlighting these applications.  
To aid in the choice of $\mathrm{SND}_\mathrm{des}$, in \autoref{app:max_snd}, we introduce the optimization problem to compute the maximum possible $\mathrm{SND}$ in a bounded action space, which can be solved to obtain an upper bound on the value of $\mathrm{SND}_\mathrm{des}$.

In \autoref{app:method} we present further details about the method, including how to disincentivize the placement of diversity outside the action domain in the context of bounded action spaces.

\section{Case Study: \textit{Multi-Agent Navigation}}
\label{sec:case_study}

\begin{figure*}[t]
\begin{center}
\centerline{\includegraphics[width=\textwidth]{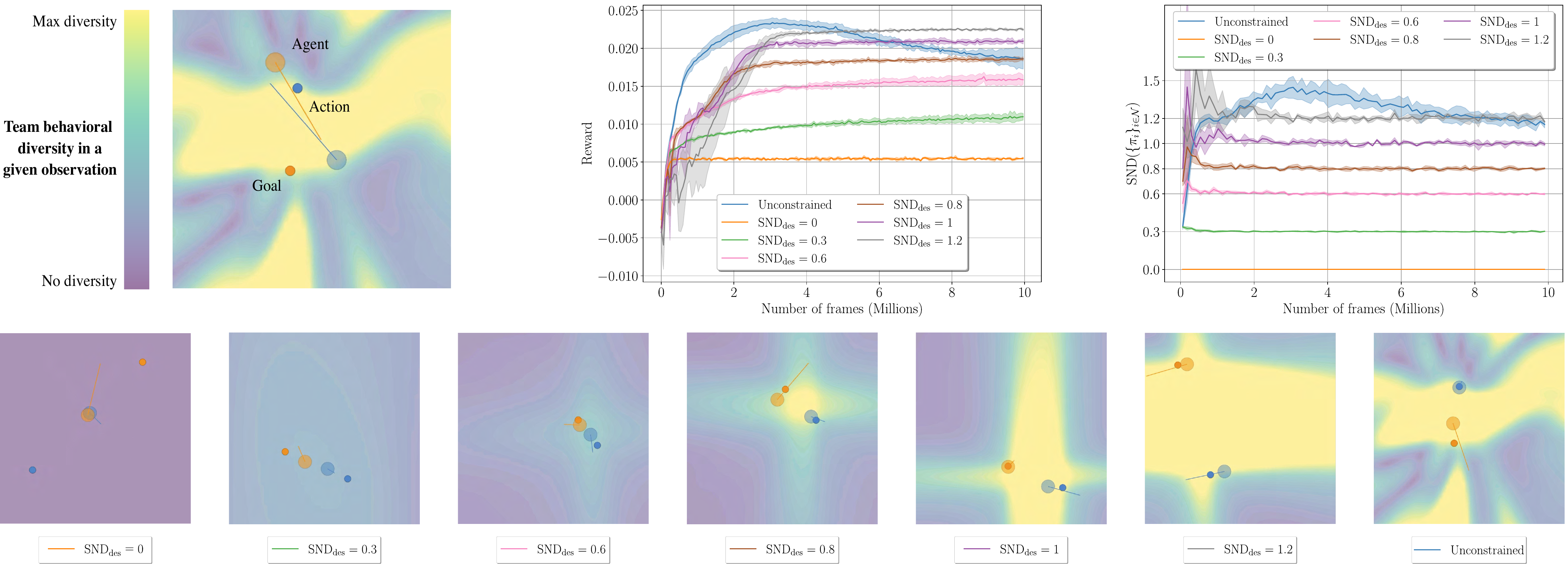}}
\caption{Illustration of the \textit{Multi-Agent Navigation} case study. \textbf{Top left:} Example task rendering illustrating task components.  System diversity is evaluated for each observation in the 2D space and plotted in the background colormap. \textbf{Top center:} Mean instantaneous reward for agents trained with different desired diversities. \textbf{Top right:} $\mathrm{SND}(\left \{ \pi_{i} \right \}_{i \in \mathcal{N}})$ evaluated  for agents trained with different desired diversities. \textbf{Bottom:} Renderings of the diversity distribution over the observation space for agents trained with different desired diversities. With a low diversity budget, agents are not able to go to different goals and learn to converge to the midpoint between goals. As the diversity budget increases, agents learn to distribute diversity in the observations where it is most useful and learn more regular diversity landscapes than the unconstrained case. Curves report mean and standard deviation for the IPPO algorithm over 4 training seeds.}
\label{fig:nav_case_study}
\end{center}
\end{figure*}

To illustrate how DiCo works, we consider the \textit{Multi-Agent Navigation} task (\autoref{fig:nav_case_study}) from the \href{\vmas}{VMAS} simulator~\cite{bettini2022vmas}. In this task, $n=2$ agents are spawned at random positions in a 2D workspace. Each agent is assigned a goal, also spawned at random. Agents observe the relative position to all goals and output a 2D action distribution representing their force movement vector. The reward for each agent is the difference in the relative distance to its goal over consecutive timesteps, incentivizing agents to move towards their goals. 
This task requires heterogeneous behavior as each agent is given the same observation and needs to tackle a different goal.

By comparing multiple policies trained with DiCo, we can analyze the effects of different diversity levels on solution behavior. Using the IPPO algorithm~\citep{de2020independent} in the \href{\benchmarl}{BenchMARL} library, we train policies with various desired diversity levels $\mathrm{SND}_{\mathrm{des}}$, as well as an unconstrained policies (\autoref{eq:unscaled_policies}). The results are reported in  \autoref{fig:nav_case_study}.

In the diversity plot in the top right, we report the $\mathrm{SND}(\left \{ \pi_{i} \right \}_{i \in \mathcal{N}})$ measured throughout learning. Note that all the constrained agents match the desired diversity, providing empirical evidence to support our theoretical claims about DiCo's ability to constrain diversity. In this particular task, we find that performance increases as a function of input diversity, as shown in the reward plot in the top center. This is due to the fact that the task requires heterogeneous behavior and, thus, constraining to a low diversity prevents agents from navigating to different goals. In the limit case where agents are fully homogeneous (\textit{i.e.} $\mathrm{SND}_\mathrm{des}=0$), the agents learn to settle at the midpoint between goals.

To analyze how the agents learn to distribute the diversity budget over the observation space, we define the diversity in a given observation as:
\begin{multline*}
\mathrm{SND}_o(\left \{ \pi_{i} \right \}_{i \in \mathcal{N}},o) =\\ \frac{2}{n(n-1)}\sum_{i=1}^n\sum_{j=i+1}^nW_2(\pi_i(o),\pi_j(o)),
\end{multline*}
and plot it for every $o$ in the observation space.
In other words, for each position in the environment, we plot the diversity between all agent action distributions, if they were computed for that position. 
This is visualized as a colormap overlaid upon the rendered environment, as shown in the bottom of \autoref{fig:nav_case_study}.
They show that, with a low diversity budget, agents are forced to act homogeneously and, thus, all learn to converge between the goals, minimizing the team distance from all goals. As the budget increases, agents are able to take different actions. 
It is important to note that they always learn to place high diversity between the goals, as this area contains the observations where they need to travel opposite directions. Furthermore, they avoid placing heterogeneity in regions where the goals lie in the same direction, as they need to take homogeneous actions in those areas. 
This leads to the emergence of a certain regularity in the diversity distributions of constrained agents, which resemble the shape of a cross. In contrast, unconstrained heterogeneous agents obtain a more chaotic landscape, sometime placing diversity in unnecessary areas.

This case study illustrates the functionality of the DiCo method, showing the emergent diversity distributions over the observation space at different constraint levels. 
In \autoref{app:case_study}, we delve into more detail about this task, discussing the effects of choosing an $\mathrm{SND}_\mathrm{des}$ that is ``too high'', and presenting results from an additional version of this task where all agents need to navigate to the same goal. In such a scenario, where diversity is detrimental, agents with different heterogeneity constraints are forced to learn diverse emergent strategies to achieve the same homogeneous objective. Renderings are available on the \href{\website}{website}\footref{link:experiments}.

\section{Experiments}
\label{sec:experiments}

We present further experiments that highlight different applications of DiCo. We consider the tasks in \autoref{fig:tasks} from the \href{\vmas}{VMAS} simulator~\cite{bettini2022vmas}. Training is performed in the \href{\benchmarl}{BenchMARL} library~\cite{bettini2023benchmarl} using \href{https://github.com/pytorch/rl}{TorchRL}~\cite{bou2023torchrl} in the backend. Depending on the task, we use either the IPPO~\cite{de2020independent} algorithm with stochastic policies,  
or DDPG-based algorithms (i.e., MADDPG~\cite{lowe2017multi}, Independent DDPG (IDDPG)), with deterministic policies. The algorithms have been chosen according to the task needs (\textit{e.g.}, reward sparsity), but DiCo is not limited to these choices and can be applied to any actor-critic algorithm\footnote{All the actor-critic algorithms already available in BenchMARL work out-of-the-box with the DiCo implementation provided (\textit{e.g.}, MASAC, ISAC).}.  

\begin{figure*}[t]
    \newcommand{\subfigsize}{0.15}
    \centering
    \subfigure[\textit{Dispersion}.]{
        \includegraphics[width=\subfigsize\textwidth,frame]{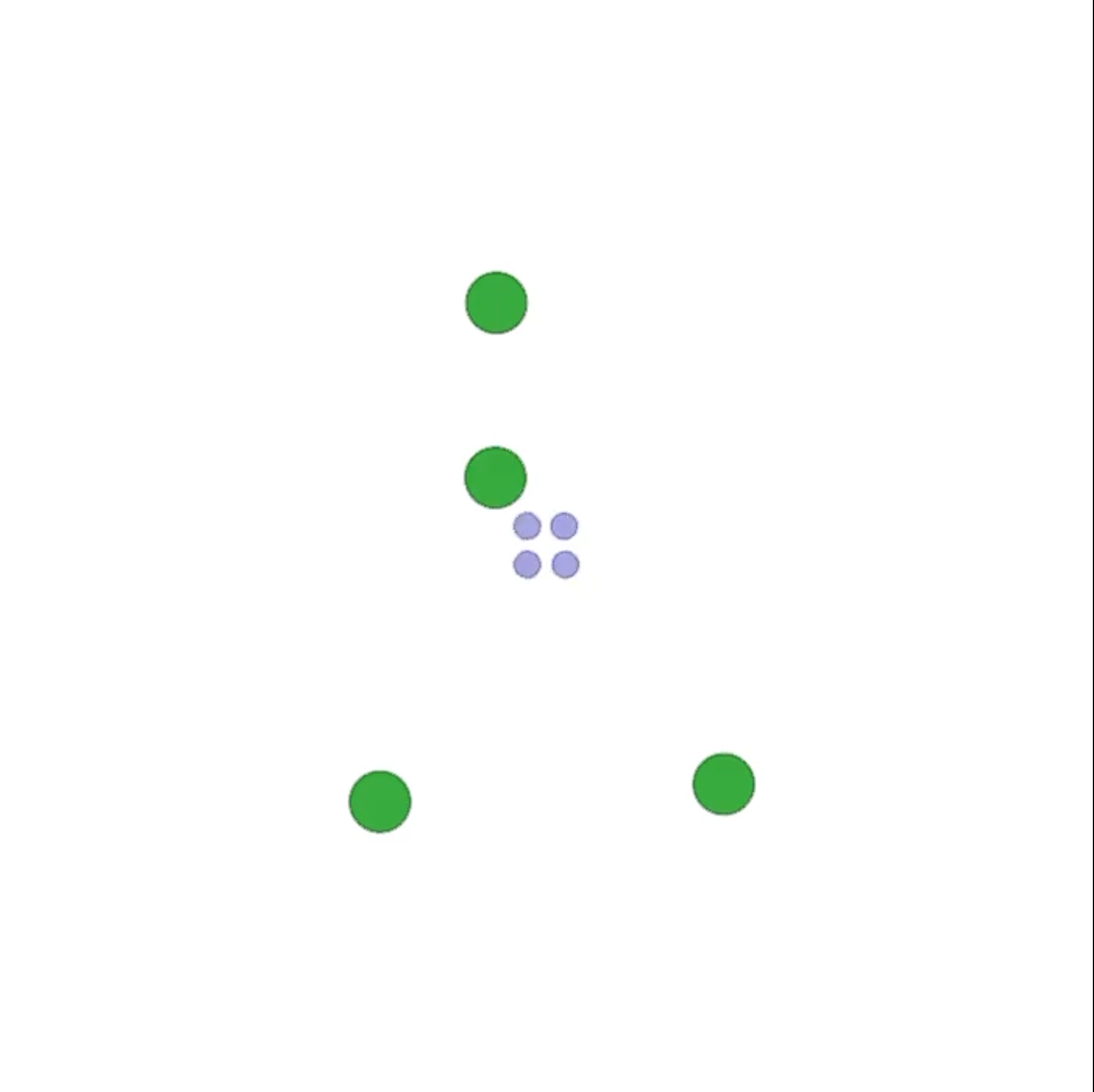}
        \label{fig:dispersion}
    }
    \hfill
    \subfigure[\textit{Sampling}.]{
        \includegraphics[width=\subfigsize\textwidth,frame]{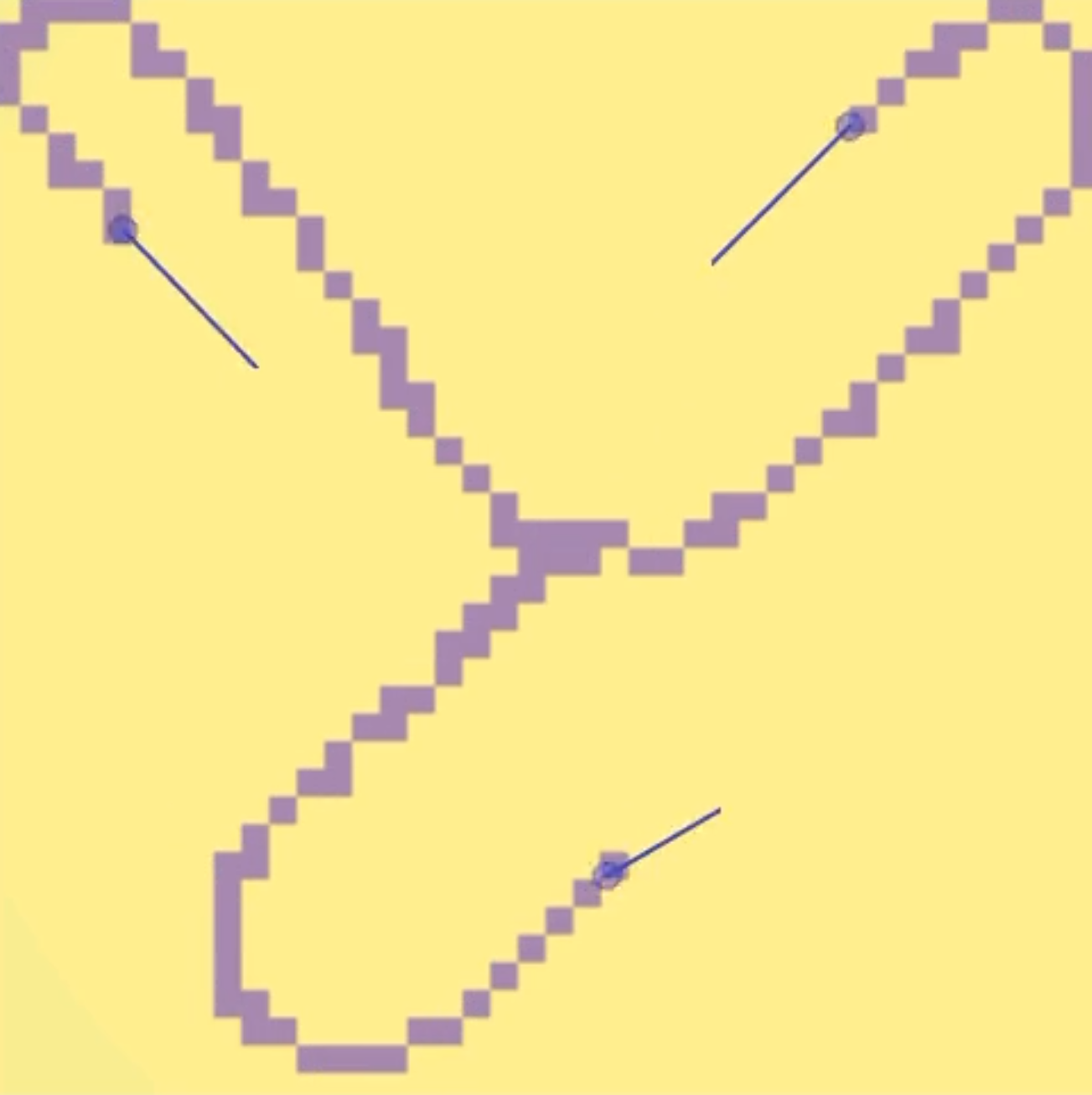}
        \label{fig:sampling}
    }
    \hfill
    \subfigure[\textit{Tag}.]{
        \includegraphics[width=\subfigsize\textwidth,frame]{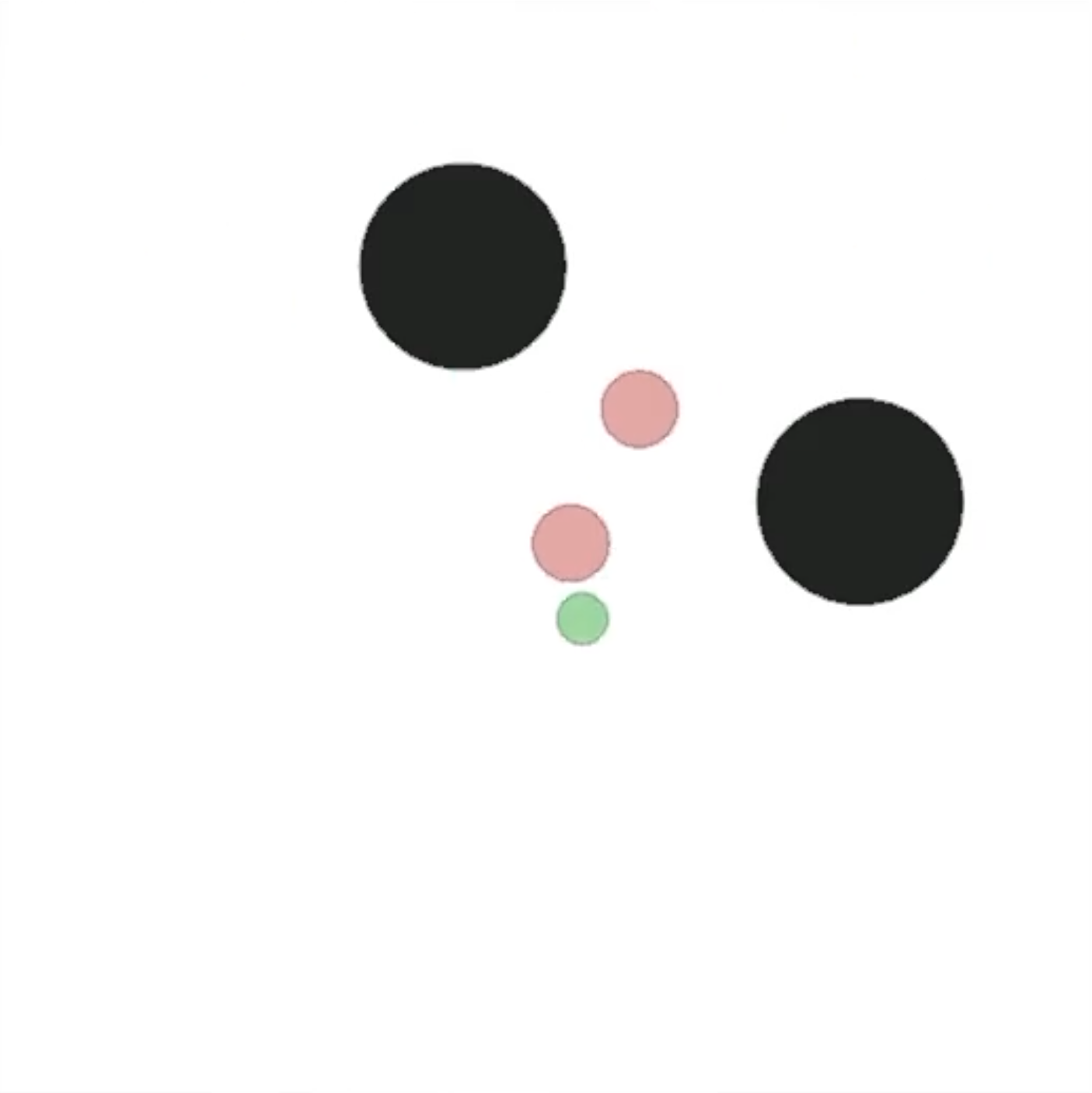}
        \label{fig:tag}
    }
    \hfill
    \subfigure[\textit{Reverse Transport}.]{
        \includegraphics[width=\subfigsize\textwidth,frame]{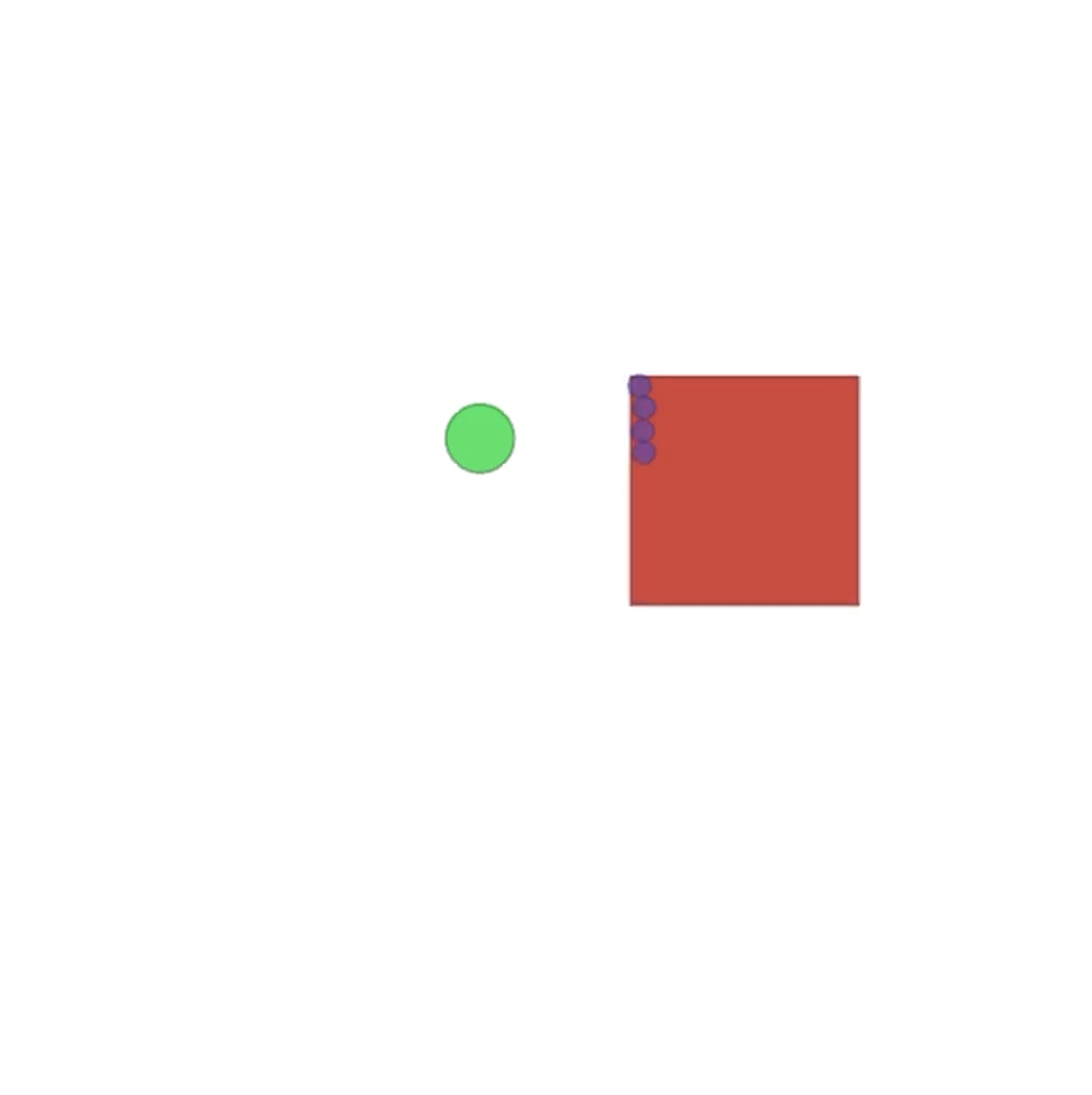}
        \label{fig:rev_transport}
    }
    \caption{Multi-agent tasks from the VMAS simulator analyzed in our experiments.}
    \label{fig:tasks}
\end{figure*}

\subsection{\textit{Dispersion}: Tackling Multiple Objectives}
\label{sec:dispersion}

\begin{figure}[t]
\begin{center}
\centerline{\includegraphics[width=\linewidth]{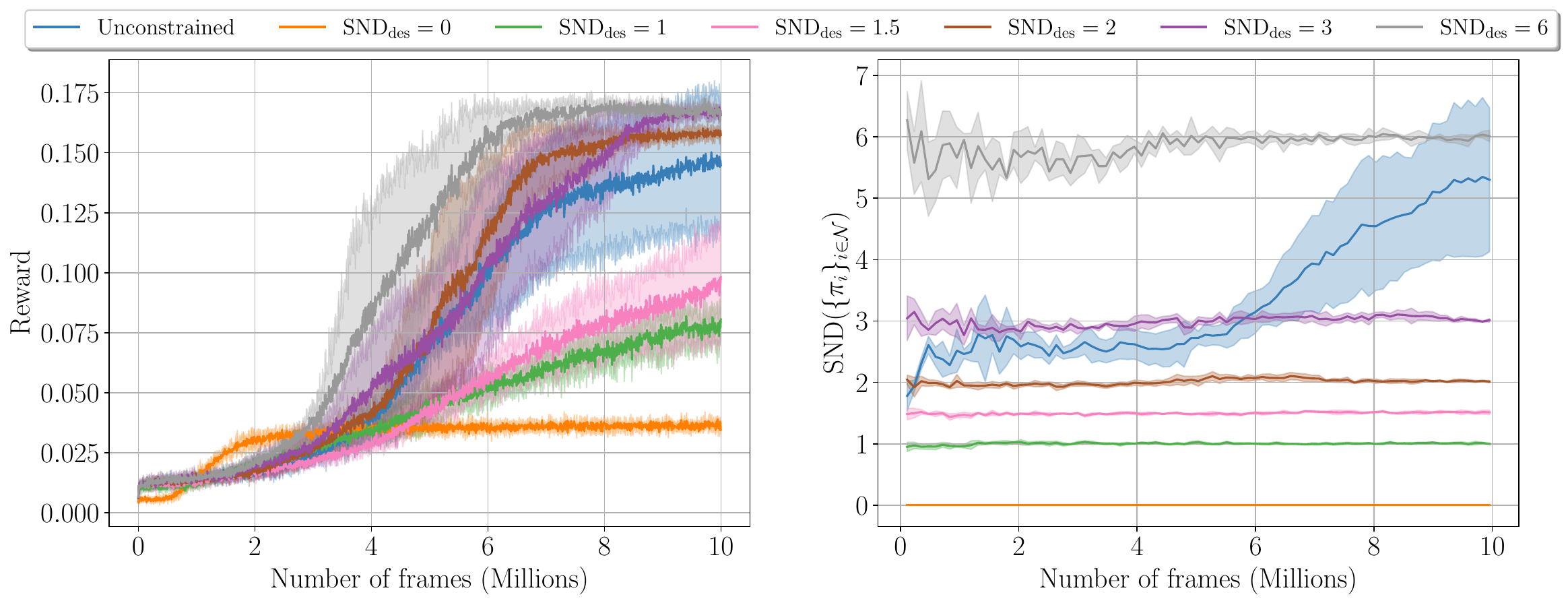}}
\caption{Results from training agents with different constraints on the \textit{Dispersion} task. \textbf{Left:} Mean instantaneous reward. \textbf{Right:} Measured diversity $\mathrm{SND}(\left \{ \pi_{i} \right \}_{i \in \mathcal{N}})$. 
Curves report mean and standard deviation for the MADDPG algorithm over 4 training seeds.}
\label{fig:dispersion_plots}
\end{center}
\end{figure}

In \textit{Dispersion} (\autoref{fig:dispersion}), $n$ agents are spawned in the center of a 2D workspace, and $n$ food particles are spawned around them at random positions. All agents observe the relative position to all food particles and whether they have been consumed. Agent actions are 2D forces that determine their motion. Agents get a sparse reward of 1 for eating food. Rewards are shared among all agents. Each food particle is consumable only once. The optimal policy in this scenario requires each agent to tackle a different food particle.

We train $n=4$ agents using MADDPG with different DiCo constraints, as well as unconstrained heterogeneous polices. In \autoref{fig:dispersion_plots}, we report the mean instantaneous training reward and the measured diversity. As expected, homogeneous agents ($\mathrm{SND}_\mathrm{des} = 0$) achieve
the lowest reward. Having to share the same policy and starting in the same position, homogeneous agents are not able to go to different food particles and thus learn to all tackle one particle at a time in a group. Unconstrained heterogeneous agents, on the other hand, are able to navigate to different food particles, but need to learn the optimal assignment through training (shown by the increasing diversity curve). When training ends, their policy is still suboptimal, with two agents sometimes pursuing the same food. In contrast, by constraining agents to a higher diversity ($\mathrm{SND}_\mathrm{des} = 6$), we are able to bootstrap the diversity discovery process and learn the optimal policy, achieving faster training convergence. 
The results confirm that, in this task, diversity is proportional to performance and we can use DiCo to find higher performing policies faster than unconstrained heterogeneous agents.
\autoref{app:dispersion} further illustrates these results by providing an analysis of the agent trajectories.

\subsection{\textit{Sampling}: Boosting Exploration}
\label{sec:sampling}

\begin{figure}[t]
\begin{center}
\centerline{\includegraphics[width=\linewidth]{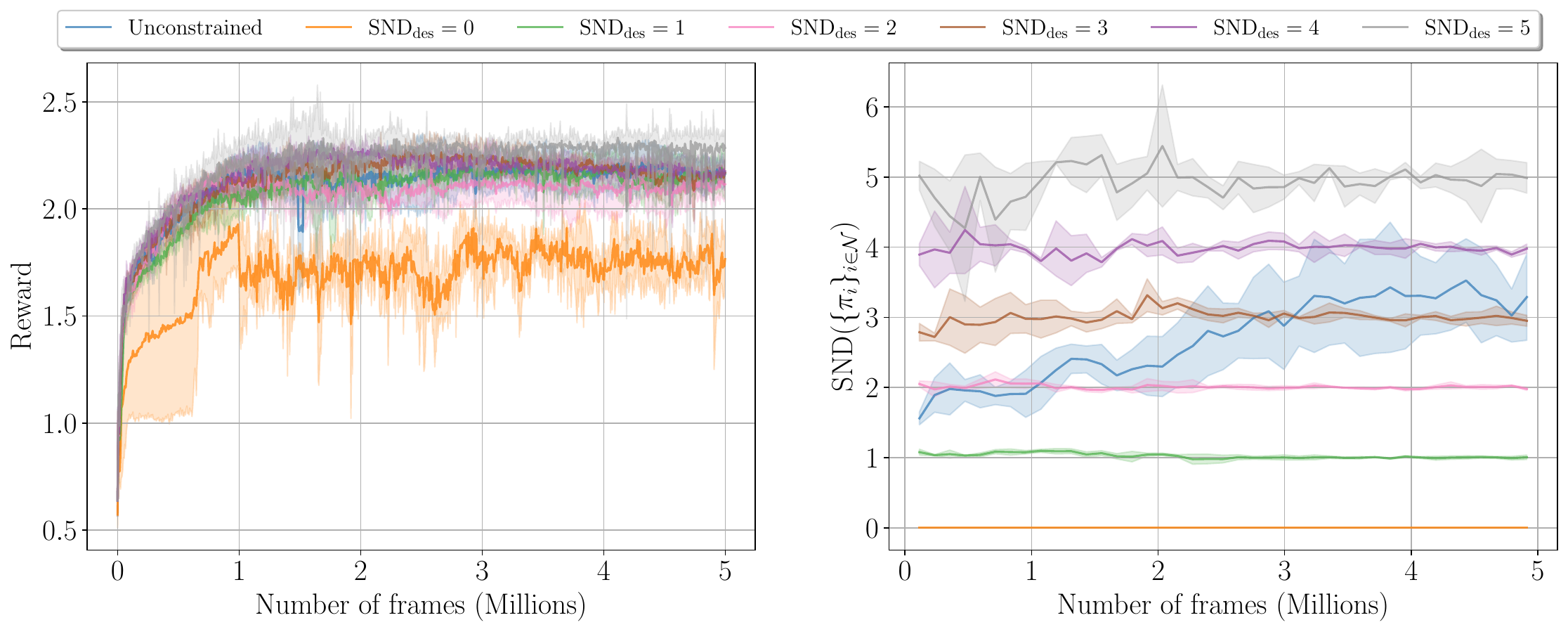}}
\caption{Results from training agents with different constraints on the \textit{Sampling} task. \textbf{Left:} Mean instantaneous reward. \textbf{Right:} Measured diversity $\mathrm{SND}(\left \{ \pi_{i} \right \}_{i \in \mathcal{N}})$. 
Curves report mean and standard deviation for the IDDPG algorithm over 3 training seeds.}
\label{fig:sampling_plots}
\end{center}
\end{figure}

In \textit{Sampling} (\autoref{fig:sampling}), $n$ agents are spawned in the center of a 2D workspace. The workspace presents an underlying uniform distribution (discretized in cells of agent size). Agent actions are 2D forces that determine their motion. Agents observe their position and the value of the distribution in a local $3\times3$ neighborhood. When an agent visits a cell, it takes a sample without replacement, which dictates the reward. Rewards are shared among all agents. Therefore, agents need to spread out and actively sample different parts of the workspace to solve this task. 

We train $n=3$ agents using IDDPG with different DiCo constraints, as well as unconstrained heterogeneous polices. In \autoref{fig:sampling_plots}, we report the mean instantaneous training reward and the measured diversity. 
As in \textit{Dispersion}, this task requires agents to be diverse in order to spread over the observation space. 
Therefore, we see how homogeneous agents ($\mathrm{SND}_\mathrm{des} = 0$) achieve the lowest reward, all visiting the same area of the space. Unconstrained heterogeneous agents are able to spread their coverage, but sometimes revisit cells sampled by others, making their policy suboptimal. By forcing the agents to a higher diversity $\mathrm{SND}_\mathrm{des} = 5$, we are able to enforce a wider spread, with each agent focusing on a separate area of the space. This is shown in the training curves, which indicate that the obtained reward is proportional to the desired diversity.
The results demonstrate that DiCo can be used as an effective method to boost exploration and improve the agent coverage of the observation space, leading to higher performing policies than the unconstrained heterogeneous method.
\autoref{app:sampling} further illustrates these results by providing an analysis of the agent trajectories.

\subsection{\textit{Tag}: Emergent Adversarial Strategies}
\label{sec:tag}

\begin{figure}[t]
\begin{center}
\centerline{\includegraphics[width=\linewidth]{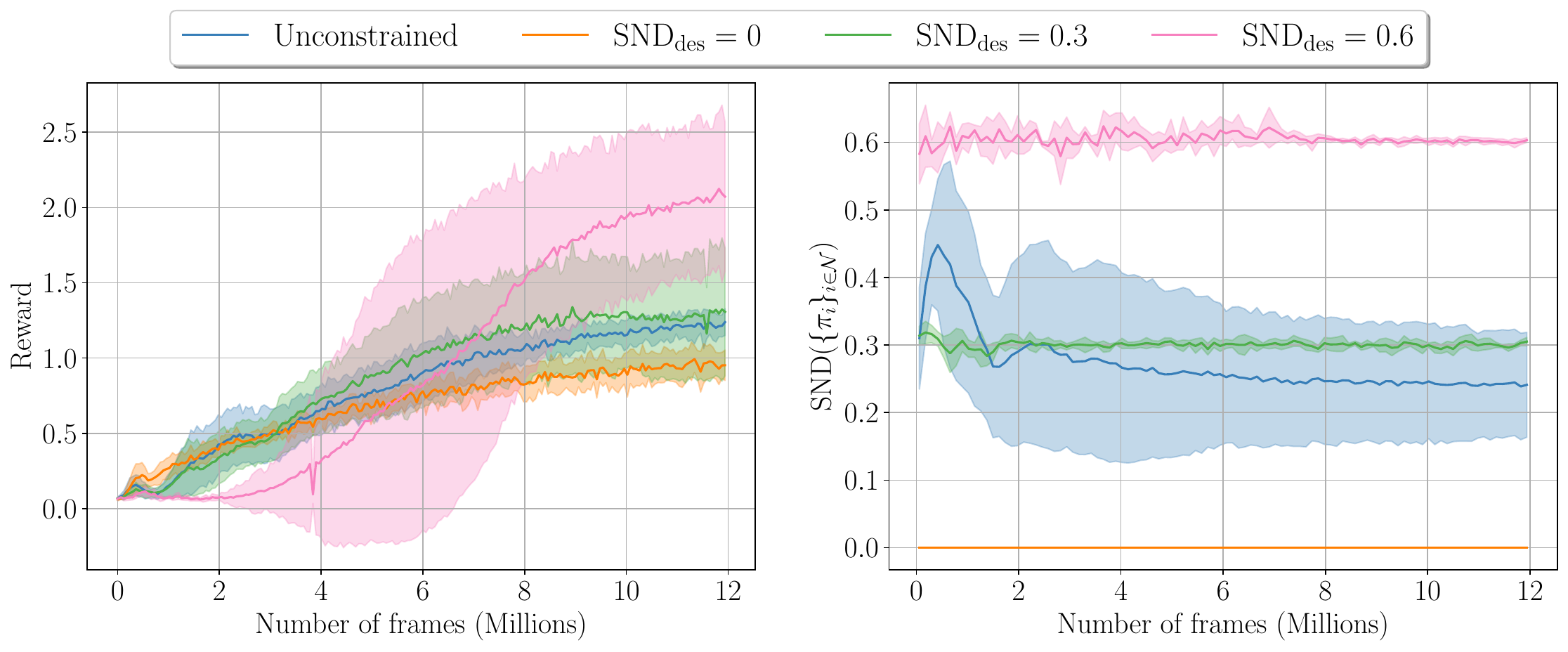}}
\caption{Results from training agents with different constraints on the \textit{Tag} task. \textbf{Left:} Mean instantaneous reward. \textbf{Right:} Measured diversity $\mathrm{SND}(\left \{ \pi_{i} \right \}_{i \in \mathcal{N}})$. 
Curves report mean and standard deviation for the IPPO algorithm over 5 training seeds.}
\label{fig:tag_plots}
\end{center}
\end{figure}

\textit{Tag}~\cite{lowe2017multi}(\autoref{fig:tag}) is an adversarial task where a team of $n$ red agents is rewarded for tagging the green agent. All agents are spawned at random positions in a 2D space, alongside two randomly spawned obstacles (black circles). Agents can collide among each other and with obstacles.
Actions are 2D forces that determine their motion.
They observe their position and the relative positions to all other agents.
Red agents obtain a shared reward of 1 for each timestep where they are touching the green agent. The green agent gets a reward of -1 for the same condition. 

We train $n=2$ red agents using IPPO with different DiCo constraints, as well as unconstrained heterogeneous polices. The green agent is also trained with IPPO for 3 million frames, and frozen for the remainder of training.
In \autoref{fig:tag_plots}, we report the mean instantaneous training reward and the measured diversity of the red agents. From the plots, we observe that homogeneous agents ($\mathrm{SND}_\mathrm{des} = 0$) and unconstrained heterogeneous agents obtain similar rewards, with the heterogeneous model performing slightly better. We also train a DiCo constrained model at the diversity level obtained by the unconstrained one ($\mathrm{SND}_\mathrm{des} = 0.3$) and observe that this model achieves the same reward. By inspecting the rollouts of these policies, we note that they present a behavioral similarity: the red agents blindly chase the green one, all trying to minimize their distance to the target. In real-life ball sports, this is a well-known myopic and suboptimal strategy (\textit{e.g.}, an entire team chasing the ball in soccer). We can intuitively also see its suboptimality in this task: due to the shared nature of the tagging reward, the chasing agents could improve their spatial coverage by diversifying their strategies. To confirm our hypothesis, we perform experiments with a higher desired diversity ($\mathrm{SND}_\mathrm{des} = 0.6$). The results prove our hypothesis, with the constrained model able to almost double the obtained reward. The agents, constrained at this diversity, show the emergence of several fascinating new strategies that resemble strategies employed by human players in ball games (\textit{e.g.}, man-to-man marking, pinching maneuvers, spreading to cut off the evader). See \autoref{app:tag} for an in-depth analysis.
The results demonstrate that, by constraining the policy search space to a specified diversity level, DiCo can be used to learn novel and diverse strategies that can overcome the suboptimality of unconstrained heterogeneous agents.
\autoref{app:tag} analyzes agent rollouts from this experiment, illustrating some of the emergent strategies.

\section{Discussion and Limitations}
\textbf{Lower Diversity Constraints.} Up to this point, we have mainly shown applications where diversity constraints that force a \textit{higher} diversity than the one achieved by unconstrained policies can be used to obtain higher performance. In \autoref{sec:rev_transport}, we provide a further experiment, in the \textit{Reverse Transport} (\autoref{fig:rev_transport}) task, showing that, by leveraging user priors about the role of diversity in a task, it is also possible to improve the training process by forcing a \textit{lower} diversity than the one achieved by the unconstrained method.

\textbf{Inequality Diversity Constraints.} In this paper, we have focused on constraining the agents' diversity to a particular value. However, using the same paradigm, it is also possible to apply more complex constraints. To define constraints in the most general form, we can introduce a function $g: \mathbb{R} \to \mathbb{R}$ that takes in the diversity prior to rescaling $\widehat{\mathrm{SND}}$ and produces the desired diversity fed as input to DiCo. In this formulation, we can define the standard constrained version of DiCo with a function that outputs a constant desired diversity $g(\widehat{\mathrm{SND}}) = \mathrm{SND}_\mathrm{des}$. Similarly, we can write the unconstrained version as $g(\widehat{\mathrm{SND}}) = \widehat{\mathrm{SND}}$. Furthermore, we can define additional forms of diversity control by using \textit{any continuous function $g$}. For example, one particularly useful special case is $g(\widehat{\mathrm{SND}}) = \max(\widehat{\mathrm{SND}}, \mathrm{SND}_\mathrm{des})$ and $g(\widehat{\mathrm{SND}}) = \min(\widehat{\mathrm{SND}}, \mathrm{SND}_\mathrm{des})$, which define inequality constraints (greater than and less than, respectively). Placing upper or lower diversity bounds can be extremely useful in practice to reduce the size of the search space or reinforce some desired behavior (\textit{e.g.}, encouraging exploration, or avoiding excessive deviation from a known safe strategy). In practice, it is very easy to adapt DiCo to control diversity within an upper and/or a lower bound. To implement this, it is sufficient to avoid rescaling the policies if their diversity before the scaling process is already in the allowed range.

\textbf{Analytical Diversity Constraints.} 
In \autoref{app:finn} we discuss an alternative version of DiCo that applies constraints in an \textit{analytical} form. Instead of evaluating the diversity of unscaled policies to determine the scaling factor, it guarantees a given diversity over the \textit{entire} observation space using a fixed-integral neural network~\cite{finn}. 

\textbf{Limitations.} The main limitation of DiCo is that it requires a certain degree of domain knowledge in order to determine the appropriate $\mathrm{SND}_{\mathrm{des}}$ for a given task. 
\autoref{app:div_too_high} discusses this limitation, and proposes solutions to mitigate it. In future work, we are interested in tackling the issue with an iterative diversity optimizer, similar to the one outlined in \autoref{app:closed_control}. In such a paradigm, the targeted diversity value provides an additional parameter that can be optimized in the constrained training process.
Lastly, using inequality constraints instead of an exact diversity can also mitigate the problem of choosing the desired diversity as it allows to specify a diversity range instead of a fixed value.

\section{Conclusion}
We introduce a novel paradigm to control behavioral diversity in MARL.
Using this paradigm, it is possible to constrain the policy search space to a desired behavioral diversity level, improving exploration, performance, and sample efficiency, as well as leading to the emergence of novel strategies.
Our proposed method, DiCo, achieves this by representing policies as the sum of a homogeneous component and heterogeneous components, which are dynamically scaled according the the current and desired value of a given diversity metric.
Unlike existing methods that rely on additional objectives and discrete actions, the architectural nature of the DiCo constraints enables its applicability to any actor-critic algorithm with continuous (stochastic or deterministic) actions.
We theoretically prove that DiCo achieves the desired diversity and empirically demonstrate its functionality through a case study, which illustrates how DiCo agents distribute diversity over the observation space. Finally, our experiments show DiCo's potential in several applications, demonstrating how it can be used as a novel tool to improve performance and sample efficiency in MARL.

\section*{Acknowledgments}
This work was supported by Army Research Laboratory (ARL) Distributed and Collaborative Intelligent Systems and Technology (DCIST) Collaborative Research Alliance (CRA) W911NF-17-2-0181 and the European Research Council (ERC) Project 949940 (gAIa).

\section*{Impact Statement}
This paper presents work whose goal is to advance the field of Multi-Agent Reinforcement Learning. There are many potential societal consequences of our work, none which we feel must be specifically highlighted here.



\bibliography{bibliography}

\begin{thebibliography}{39}
\providecommand{\natexlab}[1]{#1}
\providecommand{\url}[1]{\texttt{#1}}
\expandafter\ifx\csname urlstyle\endcsname\relax
  \providecommand{\doi}[1]{doi: #1}\else
  \providecommand{\doi}{doi: \begingroup \urlstyle{rm}\Url}\fi

\bibitem[Ang et~al.(2005)Ang, Chong, and Li]{ang2005pid}
Ang, K.~H., Chong, G., and Li, Y.
\newblock Pid control system analysis, design, and technology.
\newblock \emph{IEEE transactions on control systems technology}, 13\penalty0 (4):\penalty0 559--576, 2005.

\bibitem[Balch(2000)]{balch2000hierarchic}
Balch, T.
\newblock Hierarchic social entropy: An information theoretic measure of robot group diversity.
\newblock \emph{Autonomous robots}, 8\penalty0 (3):\penalty0 209--238, 2000.

\bibitem[Bernstein et~al.(2002)Bernstein, Givan, Immerman, and Zilberstein]{bernstein2002complexity}
Bernstein, D.~S., Givan, R., Immerman, N., and Zilberstein, S.
\newblock The complexity of decentralized control of markov decision processes.
\newblock \emph{Mathematics of operations research}, 27\penalty0 (4):\penalty0 819--840, 2002.

\bibitem[Bettini et~al.(2022)Bettini, Kortvelesy, Blumenkamp, and Prorok]{bettini2022vmas}
Bettini, M., Kortvelesy, R., Blumenkamp, J., and Prorok, A.
\newblock Vmas: A vectorized multi-agent simulator for collective robot learning.
\newblock In \emph{Proceedings of the 16th International Symposium on Distributed Autonomous Robotic Systems}, DARS '22. Springer, 2022.

\bibitem[Bettini et~al.(2023{\natexlab{a}})Bettini, Prorok, and Moens]{bettini2023benchmarl}
Bettini, M., Prorok, A., and Moens, V.
\newblock Benchmarl: Benchmarking multi-agent reinforcement learning.
\newblock \emph{arXiv preprint arXiv:2312.01472}, 2023{\natexlab{a}}.

\bibitem[Bettini et~al.(2023{\natexlab{b}})Bettini, Shankar, and Prorok]{bettini2023hetgppo}
Bettini, M., Shankar, A., and Prorok, A.
\newblock Heterogeneous multi-robot reinforcement learning.
\newblock In \emph{Proceedings of the 22nd International Conference on Autonomous Agents and Multiagent Systems}, AAMAS '23. International Foundation for Autonomous Agents and Multiagent Systems, 2023{\natexlab{b}}.

\bibitem[Bettini et~al.(2023{\natexlab{c}})Bettini, Shankar, and Prorok]{bettini2023snd}
Bettini, M., Shankar, A., and Prorok, A.
\newblock System neural diversity: Measuring behavioral heterogeneity in multi-agent learning.
\newblock \emph{arXiv preprint arXiv:2305.02128}, 2023{\natexlab{c}}.

\bibitem[Bou et~al.(2024)Bou, Bettini, Dittert, Kumar, Sodhani, Yang, De~Fabritiis, and Moens]{bou2023torchrl}
Bou, A., Bettini, M., Dittert, S., Kumar, V., Sodhani, S., Yang, X., De~Fabritiis, G., and Moens, V.
\newblock Torchrl: A data-driven decision-making library for pytorch.
\newblock In \emph{The Twelfth International Conference on Learning Representations}, 2024.
\newblock URL \url{https://openreview.net/forum?id=QxItoEAVMb}.

\bibitem[Cadotte et~al.(2011)Cadotte, Carscadden, and Mirotchnick]{cadotte2011beyond}
Cadotte, M.~W., Carscadden, K., and Mirotchnick, N.
\newblock Beyond species: functional diversity and the maintenance of ecological processes and services.
\newblock \emph{Journal of applied ecology}, 48\penalty0 (5):\penalty0 1079--1087, 2011.

\bibitem[Charakorn et~al.(2023)Charakorn, Manoonpong, and Dilokthanakul]{charakorn2023generating}
Charakorn, R., Manoonpong, P., and Dilokthanakul, N.
\newblock Generating diverse cooperative agents by learning incompatible policies.
\newblock In \emph{The Eleventh International Conference on Learning Representations}, 2023.
\newblock URL \url{https://openreview.net/forum?id=UkU05GOH7_6}.

\bibitem[Chenghao et~al.(2021)Chenghao, Wang, Wu, Zhao, Yang, and Zhang]{chenghao2021celebrating}
Chenghao, L., Wang, T., Wu, C., Zhao, Q., Yang, J., and Zhang, C.
\newblock Celebrating diversity in shared multi-agent reinforcement learning.
\newblock \emph{Advances in Neural Information Processing Systems}, 34, 2021.

\bibitem[Christianos et~al.(2021)Christianos, Papoudakis, Rahman, and Albrecht]{christianos2021scaling}
Christianos, F., Papoudakis, G., Rahman, M.~A., and Albrecht, S.~V.
\newblock Scaling multi-agent reinforcement learning with selective parameter sharing.
\newblock In \emph{International Conference on Machine Learning}, pp.\  1989--1998. PMLR, 2021.

\bibitem[de~Witt et~al.(2020)de~Witt, Gupta, Makoviichuk, Makoviychuk, Torr, Sun, and Whiteson]{de2020independent}
de~Witt, C.~S., Gupta, T., Makoviichuk, D., Makoviychuk, V., Torr, P.~H., Sun, M., and Whiteson, S.
\newblock Is independent learning all you need in the starcraft multi-agent challenge?
\newblock \emph{arXiv preprint arXiv:2011.09533}, 2020.

\bibitem[Diamond \& Boyd(2016)Diamond and Boyd]{diamond2016cvxpy}
Diamond, S. and Boyd, S.
\newblock {CVXPY}: {A} {P}ython-embedded modeling language for convex optimization.
\newblock \emph{Journal of Machine Learning Research}, 17\penalty0 (83):\penalty0 1--5, 2016.

\bibitem[Fu et~al.(2022)Fu, Yu, Xu, Yang, and Wu]{fu2022revisiting}
Fu, W., Yu, C., Xu, Z., Yang, J., and Wu, Y.
\newblock Revisiting some common practices in cooperative multi-agent reinforcement learning.
\newblock In \emph{International Conference on Machine Learning}, pp.\  6863--6877. PMLR, 2022.

\bibitem[Hu et~al.(2022)Hu, Xie, Liang, and Chang]{hu2022policy}
Hu, S., Xie, C., Liang, X., and Chang, X.
\newblock Policy diagnosis via measuring role diversity in cooperative multi-agent rl.
\newblock In \emph{International Conference on Machine Learning}, pp.\  9041--9071. PMLR, 2022.

\bibitem[Ingvarsson et~al.(2023)Ingvarsson, Samvelyan, Lim, Flageat, Cully, and Rockt{\"a}schel]{ingvarsson2023mix}
Ingvarsson, G., Samvelyan, M., Lim, B., Flageat, M., Cully, A., and Rockt{\"a}schel, T.
\newblock Mix-me: Quality-diversity for multi-agent learning.
\newblock \emph{arXiv e-prints}, pp.\  arXiv--2311, 2023.

\bibitem[Jaderberg et~al.(2019)Jaderberg, Czarnecki, Dunning, Marris, Lever, Castaneda, Beattie, Rabinowitz, Morcos, Ruderman, et~al.]{jaderberg2019human}
Jaderberg, M., Czarnecki, W.~M., Dunning, I., Marris, L., Lever, G., Castaneda, A.~G., Beattie, C., Rabinowitz, N.~C., Morcos, A.~S., Ruderman, A., et~al.
\newblock Human-level performance in 3d multiplayer games with population-based reinforcement learning.
\newblock \emph{Science}, 364\penalty0 (6443):\penalty0 859--865, 2019.

\bibitem[Jaques et~al.(2019)Jaques, Lazaridou, Hughes, Gulcehre, Ortega, Strouse, Leibo, and De~Freitas]{jaques2019social}
Jaques, N., Lazaridou, A., Hughes, E., Gulcehre, C., Ortega, P., Strouse, D., Leibo, J.~Z., and De~Freitas, N.
\newblock Social influence as intrinsic motivation for multi-agent deep reinforcement learning.
\newblock In \emph{International Conference on Machine Learning}, pp.\  3040--3049. PMLR, 2019.

\bibitem[Jiang \& Lu(2021)Jiang and Lu]{jiang2021emergence}
Jiang, J. and Lu, Z.
\newblock The emergence of individuality.
\newblock In \emph{International Conference on Machine Learning}, pp.\  4992--5001. PMLR, 2021.

\bibitem[Kellert(1997)]{kellert1997value}
Kellert, S.~R.
\newblock \emph{The value of life: Biological diversity and human society}.
\newblock Island press, 1997.

\bibitem[Kortvelesy(2023)]{finn}
Kortvelesy, R.
\newblock Fixed integral neural networks.
\newblock \emph{arXiv preprint arXiv:2307.14439}, 2023.

\bibitem[Liu et~al.(2021)Liu, Jia, Wen, Hu, Chen, Fan, Hu, and Yang]{liu2021towards}
Liu, X., Jia, H., Wen, Y., Hu, Y., Chen, Y., Fan, C., Hu, Z., and Yang, Y.
\newblock Towards unifying behavioral and response diversity for open-ended learning in zero-sum games.
\newblock \emph{Advances in Neural Information Processing Systems}, 34:\penalty0 941--952, 2021.

\bibitem[Liu et~al.(2022)Liu, Yu, Yang, Wu, Li, et~al.]{liu2022unified}
Liu, Z., Yu, C., Yang, Y., Wu, Z., Li, Y., et~al.
\newblock A unified diversity measure for multiagent reinforcement learning.
\newblock \emph{Advances in Neural Information Processing Systems}, 35:\penalty0 10339--10352, 2022.

\bibitem[Lowe et~al.(2017)Lowe, Wu, Tamar, Harb, Pieter~Abbeel, and Mordatch]{lowe2017multi}
Lowe, R., Wu, Y.~I., Tamar, A., Harb, J., Pieter~Abbeel, O., and Mordatch, I.
\newblock Multi-agent actor-critic for mixed cooperative-competitive environments.
\newblock \emph{Advances in neural information processing systems}, 30, 2017.

\bibitem[Mahajan et~al.(2019)Mahajan, Rashid, Samvelyan, and Whiteson]{mahajan2019maven}
Mahajan, A., Rashid, T., Samvelyan, M., and Whiteson, S.
\newblock Maven: Multi-agent variational exploration.
\newblock \emph{Advances in neural information processing systems}, 32, 2019.

\bibitem[McKee et~al.(2022)McKee, Leibo, Beattie, and Everett]{mckee2022quantifying}
McKee, K.~R., Leibo, J.~Z., Beattie, C., and Everett, R.
\newblock Quantifying the effects of environment and population diversity in multi-agent reinforcement learning.
\newblock \emph{Autonomous Agents and Multi-Agent Systems}, 36\penalty0 (1):\penalty0 1--16, 2022.

\bibitem[Menger \& Menger(2003)Menger and Menger]{menger2003statistical}
Menger, K. and Menger, K.
\newblock Statistical metrics.
\newblock \emph{Selecta Mathematica: Volume 2}, pp.\  433--435, 2003.

\bibitem[Parker-Holder et~al.(2020)Parker-Holder, Pacchiano, Choromanski, and Roberts]{parker2020effective}
Parker-Holder, J., Pacchiano, A., Choromanski, K.~M., and Roberts, S.~J.
\newblock Effective diversity in population based reinforcement learning.
\newblock \emph{Advances in Neural Information Processing Systems}, 33:\penalty0 18050--18062, 2020.

\bibitem[Perez-Nieves et~al.(2021)Perez-Nieves, Yang, Slumbers, Mguni, Wen, and Wang]{perez2021modelling}
Perez-Nieves, N., Yang, Y., Slumbers, O., Mguni, D.~H., Wen, Y., and Wang, J.
\newblock Modelling behavioural diversity for learning in open-ended games.
\newblock In \emph{International conference on machine learning}, pp.\  8514--8524. PMLR, 2021.

\bibitem[Shapley(1953)]{shapley1953stochastic}
Shapley, L.~S.
\newblock Stochastic games.
\newblock \emph{Proceedings of the national academy of sciences}, 39\penalty0 (10):\penalty0 1095--1100, 1953.

\bibitem[Vaserstein(1969)]{vaserstein1969markov}
Vaserstein, L.~N.
\newblock Markov processes over denumerable products of spaces, describing large systems of automata.
\newblock \emph{Problemy Peredachi Informatsii}, 5\penalty0 (3):\penalty0 64--72, 1969.

\bibitem[Vinyals et~al.(2019)Vinyals, Babuschkin, Czarnecki, Mathieu, Dudzik, Chung, Choi, Powell, Ewalds, Georgiev, et~al.]{vinyals2019grandmaster}
Vinyals, O., Babuschkin, I., Czarnecki, W.~M., Mathieu, M., Dudzik, A., Chung, J., Choi, D.~H., Powell, R., Ewalds, T., Georgiev, P., et~al.
\newblock Grandmaster level in starcraft ii using multi-agent reinforcement learning.
\newblock \emph{Nature}, 575\penalty0 (7782):\penalty0 350--354, 2019.

\bibitem[Wang et~al.(2019)Wang, Wang, Wu, and Zhang]{wang2019influence}
Wang, T., Wang, J., Wu, Y., and Zhang, C.
\newblock Influence-based multi-agent exploration.
\newblock In \emph{International Conference on Learning Representations}, 2019.

\bibitem[Wang et~al.(2020)Wang, Dong, Lesser, and Zhang]{wang2020roma}
Wang, T., Dong, H., Lesser, V., and Zhang, C.
\newblock Roma: Multi-agent reinforcement learning with emergent roles.
\newblock In \emph{International Conference on Machine Learning}, pp.\  9876--9886. PMLR, 2020.

\bibitem[Woolley et~al.(2015)Woolley, Aggarwal, and Malone]{woolley2015collective}
Woolley, A.~W., Aggarwal, I., and Malone, T.~W.
\newblock Collective intelligence and group performance.
\newblock \emph{Current Directions in Psychological Science}, 24\penalty0 (6):\penalty0 420--424, 2015.

\bibitem[Wu et~al.(2023)Wu, Yao, Fu, Tian, Qian, Yang, FU, and Wei]{wu2023qualitysimilar}
Wu, S., Yao, J., Fu, H., Tian, Y., Qian, C., Yang, Y., FU, Q., and Wei, Y.
\newblock Quality-similar diversity via population based reinforcement learning.
\newblock In \emph{The Eleventh International Conference on Learning Representations}, 2023.
\newblock URL \url{https://openreview.net/forum?id=bLmSMXbqXr}.

\bibitem[Yadan(2019)]{Yadan2019Hydra}
Yadan, O.
\newblock Hydra - a framework for elegantly configuring complex applications.
\newblock Github, 2019.
\newblock URL \url{https://github.com/facebookresearch/hydra}.

\bibitem[Yu et~al.(2021)Yu, Jiang, and Lu]{yu2021informative}
Yu, Y., Jiang, H., and Lu, Z.
\newblock Informative policy representations in multi-agent reinforcement learning via joint-action distributions.
\newblock \emph{arXiv e-prints}, pp.\  arXiv--2106, 2021.

\end{thebibliography}
\bibliographystyle{icml2024}

\newpage
\appendix
\onecolumn
\section{Codebase and Links}
Multimedia material accompanying the paper can be found on the paper's website: \url{\website}.

The code and running instructions are available on GitHub at \url{https://github.com/proroklab/ControllingBehavioralDiversity}. Documentation has been written for all project files to ease the reproduction process.

\subsection{Hyperparameters}

Experiment configurations use the \href{\benchmarl}{BenchMARL}~\cite{bettini2023benchmarl} configuration structure, leveraging Hydra~\cite{Yadan2019Hydra} to decouple YAML configuration files from the Python codebase. 

The submitted code files contain thorough instructions on how to install, run, and configure the project.

Configuration parameters can be found in the \lstinline[columns=fixed]{conf} folder, sorted in \lstinline[columns=fixed]{experiment}, \lstinline[columns=fixed]{algorithm}, \lstinline[columns=fixed]{task}, and \lstinline[columns=fixed]{model} sub-folders. Each file has thorough documentation explaining the effect and meaning of each hyperparameter.
\section{Computational Resources Used}

For the realization of this work, several hours of compute resources have been used.
In particular:
\begin{itemize}
    \item For the purpose of rapid experimentation, we estimate 500 compute hours using an NVIDIA GeForce RTX 2080 Ti GPU and an Intel(R) Xeon(R) Gold 6248R CPU @ 3.00GHz.
    \item For the purpose of running the final experiments on multiple seeds, we estimate 2000 HPC compute hours using an NVIDIA A100-SXM-80GB GPU and 32 cores of a AMD EPYC 7763 64-Core Processor CPU @ 1.8GHz .
\end{itemize}

\section{Computational Complexity}
In the following, we report some aspects related to the computational complexity of DiCo during training and execution.

\textbf{Execution.} Deploying a model trained using DiCo has the same computational cost as deploying an unconstrained heterogeneous network. This is due to the fact that the policy scaling factor is only updated during training and it is fixed at execution time. Therefore, a forward pass of the DiCo model (for $n$ agents)  will just require one forward pass of the homogeneous model and one forward pass of each individual agent network, amounting to $n+1$ forward passes. In case of decentralized deployment (e.g., on physically distinct robots), these would be executed in parallel on separate hardware. In case of centralized deployment, our implementation uses \lstinline{torch.vmap} to batch the $n$ agent calls on GPU, speeding up computation.

\textbf{Training.} In addition to the costs observed for execution, during training DiCo also needs to update the current diversity of the unscaled policies. This requires an additional $n$ forward passes of the individual agent networks to compute the agent actions over the set of evaluation observations $O$, and $\frac{n(n-1)}{2}$ calls to the Wasserstein metric to compute the distance between all agent pairs. In total, this amounts to $2n+1$ forward passes and $\frac{n(n-1)}{2}$ calls to Wasserstein (which is a simple closed-form mathematical expression). Both of these calls can be batched in the inputs (the forward passes are already implemented this way using \lstinline{torch.vmap}).

Therefore, for both training and execution, the number of forward passes (which are usually the time bottleneck) scales linearly in the number of agents. During training, the number of calls to Wassertein scales quadratically in the number of agents, as all pairs need to be evaluated.

\section{Additional Related Works}
\label{app:related}

In this section we present additional works related to the topic of the paper.

\textbf{Population diversity.}
Population-based RL considers evolving a population of agents in order to increase exploration~\cite{wu2023qualitysimilar}.
Promoting diversity in such populations has been shown to lead to improved returns~\cite{parker2020effective}.
This approach has successfully been applied to MARL problems~\cite{jaderberg2019human, vinyals2019grandmaster}.
\citet{charakorn2023generating} proposed a training objective that regularizes agents in a population to find solutions that are compatible with their partner agents while not compatible with any other agents in the population. However the proposed approach is limited to 2-player games.
\citet{ingvarsson2023mix} apply the MAP-Elites Quality-Diversity algorithm to MARL to discover populations of diverse and high-preforming teams. All the works presented consider the problem of promoting diversity among agents in a population. In contrast, our work does not train populations of agents and considers the problem of controlling the behavioral diversity of concurrently acting agents in a task.
Further works~\cite{perez2021modelling,liu2021towards,liu2022unified} introduce novel diversity metrics that they then maximize as part of the learning objective, either via the use of additional loss terms or intrinsic rewards. In our literature review, we refer to these approaches as ‘diversity boosting’, and describe their differences from our approach. In particular, the main difference to our work, apart from the type of games tackled and the population-based context, is that DiCo aims to avoid controlling diversity by augmenting the learning objective and instead applies diversity constraints directly to the policy structure.

\section{DiCo with a General Diversity Metric}
\label{app:general_metric}

This section presents a theorem that defines the property that a diversity metric needs to satisfy in order for \autoref{thm:scaling} to hold and the metric being usable with DiCo.

\begin{theorem}[DiCo with general diversity metric]
\label{thm:general_metric}
Given a general diversity metric $\mathrm{M}: \{\pi_i\}_{i\in \mathcal{N}} \mapsto \R_{\geq0}$, if it holds that $\mathrm{M}(\{\pi_h + c\pi_{h,i}\}_{i\in \mathcal{N}}) = c\mathrm{M}(\{\pi_{h,i}\}_{i\in \mathcal{N}})$, for all $c \in \R_{\geq0}$,
then \autoref{thm:scaling} holds for this metric. In particular, when $c = \frac{\mathrm{M}_\mathrm{des}}{\widehat{\mathrm{M}}}$, the diversity of the scaled policies matches the desired diversity: 
$$
\mathrm{M}\left(\left\{\pi_h + \frac{\mathrm{M}_\mathrm{des}}{\widehat{\mathrm{M}}}\pi_{h,i}\right\}_{i\in \mathcal{N}}\right) = \mathrm{M}_\mathrm{des}.
$$
\end{theorem}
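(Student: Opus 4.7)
The plan is to derive the claim directly from the hypothesized scaling property of $\mathrm{M}$, treating the theorem as a one-line substitution argument, and then to separate out the degenerate case $\widehat{\mathrm{M}} = 0$, which is where any real care is needed.

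First I would recall the definition $\widehat{\mathrm{M}} \coloneqq \mathrm{M}(\{\pi_{h,i}\}_{i\in\mathcal{N}})$ and assume $\widehat{\mathrm{M}} > 0$, so that $c \coloneqq \mathrm{M}_\mathrm{des}/\widehat{\mathrm{M}}$ is a well-defined nonnegative real. Then I would apply the hypothesis $\mathrm{M}(\{\pi_h + c\pi_{h,i}\}_{i\in\mathcal{N}}) = c\,\mathrm{M}(\{\pi_{h,i}\}_{i\in\mathcal{N}})$ with this specific choice of $c$ to obtain
\begin{equation*}
\mathrm{M}\!\left(\left\{\pi_h + \tfrac{\mathrm{M}_\mathrm{des}}{\widehat{\mathrm{M}}}\pi_{h,i}\right\}_{i\in\mathcal{N}}\right) = \frac{\mathrm{M}_\mathrm{des}}{\widehat{\mathrm{M}}}\,\mathrm{M}(\{\pi_{h,i}\}_{i\in\mathcal{N}}) = \frac{\mathrm{M}_\mathrm{des}}{\widehat{\mathrm{M}}}\,\widehat{\mathrm{M}} = \mathrm{M}_\mathrm{des},
\end{equation*}
which is the stated conclusion. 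Since the scaled policies in the theorem statement are literally $\pi_i = \pi_h + c\pi_{h,i}$ with this $c$, this immediately yields $\mathrm{M}(\{\pi_i\}_{i\in\mathcal{N}}) = \mathrm{M}_\mathrm{des}$, recovering Theorem~5.1 for the metric $\mathrm{M}$.

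The one subtlety, and what I expect to be the only real obstacle, is the degenerate case $\widehat{\mathrm{M}} = 0$. Here the scaling factor $\mathrm{M}_\mathrm{des}/\widehat{\mathrm{M}}$ is undefined, and moreover the hypothesis cannot rescue us: if $\mathrm{M}$ vanishes on $\{\pi_{h,i}\}$, no multiplicative rescaling can produce nonzero diversity. I would handle this by adopting the convention (used implicitly throughout DiCo) that when $\widehat{\mathrm{M}} = 0$ the per-agent deviations are collapsed and the policies reduce to $\pi_i = \pi_h$, giving $\mathrm{M}(\{\pi_i\}_{i\in\mathcal{N}}) = 0$; the theorem then holds vacuously whenever $\mathrm{M}_\mathrm{des} = 0$, and is inapplicable otherwise (this boundary behaviour mirrors the numerical safeguards described in the main method section, where a small $\epsilon$ or a soft update on $\widehat{\mathrm{M}}$ prevents the pathology from occurring in practice). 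With this convention in place, the proof consists solely of the substitution above, and the remainder of the argument is purely bookkeeping.
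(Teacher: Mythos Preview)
Your proof is correct and follows essentially the same substitution argument as the paper: apply the hypothesis with $c = \mathrm{M}_\mathrm{des}/\widehat{\mathrm{M}}$, then use $\widehat{\mathrm{M}} = \mathrm{M}(\{\pi_{h,i}\}_{i\in\mathcal{N}})$ to cancel. Your treatment of the degenerate case $\widehat{\mathrm{M}} = 0$ is additional care that the paper's proof omits.
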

\begin{proof}
    By applying the property defined in the theorem to the left hand side of the equation, we get:
    $$
    \frac{\mathrm{M}_\mathrm{des}}{\widehat{\mathrm{M}}}
    \mathrm{M}\left(\left\{ \pi_{h,i}\right\}_{i\in \mathcal{N}}\right) = \mathrm{M}_\mathrm{des}.
    $$
    We know that $\widehat{\mathrm{M}}$ is defined as $\mathrm{M}\left(\left\{ \pi_{h,i}\right\}_{i\in \mathcal{N}}\right)$. Thus, we can simplify it, leaving:
    $$
     \mathrm{M}_\mathrm{des} = \mathrm{M}_\mathrm{des}.
    $$
\end{proof}
This property implies that a diversity metric $\mathrm{M}$ needs:
\begin{itemize}
    \item to not be dependent on the homogeneous policy component (an assumption that we find reasonable, as this component is equal for all agents);
    \item to follow the mathematical property of homogeneity\footnote{\url{https://en.wikipedia.org/wiki/Homogeneous_function}} ($cf(x) = f(cx)$, a property that, for example, holds for all norm functions) (here “homogeneity” refers to the mathematical property and not the agent homogeneity from our paper).
\end{itemize}
Therefore, any diversity measure that is based on the norm of the difference between agent policies will satisfy this assumption. Several metrics satisfy this property, such as: Hierarchical Social Entropy (HSE)~\cite{balch2000hierarchic} and SND with the determinant~\cite{parker2020effective} of the behavioral distance matrix (see \cite{bettini2023snd} for the definition of such matrix) as the system aggregation function (instead of the mean, as in SND).

Users that want to use DiCo with diversity metrics other than SND and HSE, just need to prove that this property holds for their desired metric to be able to run DiCo with it.

\subsection{Motivation for SND}

This paper uses SND as the metric for DiCo. 
Our choice of SND as the metric employed in this work is motivated by various reasons. Most of these reasons are introduced in the SND paper~\cite{bettini2023snd}. In the following, we report the core motivations.

\textbf{Lower level (distance between agents).} For the lower level distance between agent policies we are interested in using a statistical metric that has a closed-form equation for multivariate normal distributions (our policy parametrization). Statistical divergences (i.e. (symmetric) KL divergence, JS-Divergence) do not follow triangular inequality and thus do not have the properties of a metric. Furthermore, (symmetric) KL divergence has a value of infinity for Delta distributions, while Wasserstein has a value equal to the distance between the means. Reinforcement learning policies often converge to Delta distributions after the exploration phase, and, thus, using KL would lead to infinity distances among agents. For these reasons, Wasserstein proves a valid candidate that satisfies all our theoretical and practical requirements.

\textbf{Higher level (system aggregation).} The choice of a higher level aggregation function that provides a single scalar given the matrix of agent distances is a complex one. The purpose of this aggregation is to summarize information about a particular property of the system, which will inevitably incur in some information loss. In the same way that we cannot distinguish $\mathrm{mean}(4,4,1)$ and $\mathrm{mean}(3,3,3)$, it is common for aggregation functions to map different sets to the same value (e.g., $\max(1,0,0)=\max(1,1,0)$, $\mathrm{sum}(1,1,1)=\mathrm{sum}(3,0,0)$). Therefore, the choice of the aggregation function depends on the type of information loss that we would like to incur. 
The main reason why we adopt the mean aggregation from SND is due to its two core properties presented in~\cite{bettini2023snd}: \textit{Invariance in the Number of Equidistant Agents} and \textit{Measure of Behavioral Redundancy}. In particular, the first property allows us to design the target independently from the number of agents in the task.
\section{Proofs of \autoref{thm:scaling}.}
\label{sec:proofs}

In this section, we will provide the proofs for \autoref{thm:scaling}. We will use the definition of $\mathrm{SND}$ reported in \autoref{eq:snd} and the closed-form solution for computing the Wasserstein metric between multivariate normal distributions. 

\textbf{Wasserstein metric for multivariate normal distributions: }
Let $\pi_1 = \mathcal{N}(\mu_1,\Sigma_1)$ and $\pi_2 = \mathcal{N}(\mu_2,\Sigma_2)$ be two multivariate normal distributions on $\R^m$. Then, the 2-Wasserstein distance between $\pi_1$ and $\pi_2$ is computed as:
\begin{equation}
\label{eq:wasserstein}
W_2(\pi_1,\pi_2) = \sqrt{||\mu_1-\mu_2||^2_2  + \mathrm{trace}(\Sigma_1+\Sigma_2- 2(\Sigma_2^{\frac{1}{2}}\Sigma_1\Sigma_2^{\frac{1}{2}})^{\frac{1}{2}})} 
\end{equation}
In this work, we consider policies with the form $\pi_i = \mathcal{N}(\mu_i,\sigma_i)$, with $\mu_i,\sigma_i\in\R^m$, where $\sigma_i$ is a standard deviation vector which uniquely defines a diagonal covariance matrix $\Sigma_i\in\R^{m\times m}$, $\Sigma_i = \mathrm{diag}(\sigma_i^2)$.

\begin{proof}
Our goal is to prove that:
$$
\mathrm{SND}(\{\pi_{i}\}_{i\in\mathcal{N}}) = \mathrm{SND}_\mathrm{des},
$$
where $\widehat{\mathrm{SND}} = \mathrm{SND}(\{\pi_{h,i}\}_{i\in\mathcal{N}})$ in the policy formulation of \autoref{eq:scaled_policies}.

By substituting the $\mathrm{SND}$ formulation from \autoref{eq:snd} we obtain:
$$
\frac{2}{n(n-1)|O|}\sum_{i=1}^n\sum_{j=i+1}^n \sum_{o\in O }W_2(\pi_i(o),\pi_j(o)) = \mathrm{SND}_\mathrm{des}.
$$
We further substitute the definition of policies $\{\pi_i\}_{i\in\mathcal{N}}$ from \autoref{eq:scaled_policies}:
\begin{equation}
\label{eq:proof_snd}
\frac{2}{n(n-1)|O|}\sum_{i=1}^n\sum_{j=i+1}^n \sum_{o\in O }W_2 \left (\pi_h(o) + \frac{\mathrm{SND}_\mathrm{des}}{\widehat{\mathrm{SND}}}\pi_{h,i}(o),\pi_h(o) + \frac{\mathrm{SND}_\mathrm{des}}{\widehat{\mathrm{SND}}}\pi_{h,j}(o) \right ) = \mathrm{SND}_\mathrm{des}.
\end{equation}

This is the formulation we need to prove.

\textbf{Deterministic policies: }
We consider deterministic polices $\pi_h(o) = [\mu_{h}(o)]$, $\pi_{h,i}(o) = [\mu_{h,i}(o)]$.
We can thus rewrite \autoref{eq:proof_snd} using the $W_2$ formulation from \autoref{eq:wasserstein}. We observe that the covariance term can be removed, due to the deterministic nature of the policies.

\begin{equation*}
\frac{2}{n(n-1)|O|}\sum_{i=1}^n\sum_{j=i+1}^n \sum_{o\in O }
\sqrt{\left \|\left ( \mu_h(o) + \frac{\mathrm{SND}_\mathrm{des}}{\widehat{\mathrm{SND}}}\mu_{h,i}(o)\right )-\left ( \mu_h(o) + \frac{\mathrm{SND}_\mathrm{des}}{\widehat{\mathrm{SND}}}\mu_{h,j}(o)\right ) \right \|^2_2} = \mathrm{SND}_\mathrm{des}.
\end{equation*}

We notice that $\mu_h(o)$ simplifies. Further, the scaling factor $\frac{\mathrm{SND}_\mathrm{des}}{\widehat{\mathrm{SND}}}$ can be taken out of all the equation layers as it is positive and does not depend on $i,j,o$. We can rewrite as:

\begin{equation*}
\frac{\mathrm{SND}_\mathrm{des}}{\widehat{\mathrm{SND}}}\frac{2}{n(n-1)|O|}\sum_{i=1}^n\sum_{j=i+1}^n \sum_{o\in O }
\sqrt{\left \|\mu_{h,i}(o) -  \mu_{h,j}(o) \right \|^2_2} = \mathrm{SND}_\mathrm{des}.
\end{equation*}

We can simplify $\mathrm{SND}_\mathrm{des}$, yielding:

\begin{equation*}
\frac{2}{n(n-1)|O|}\sum_{i=1}^n\sum_{j=i+1}^n \sum_{o\in O }
\sqrt{\left \|\mu_{h,i}(o) -  \mu_{h,j}(o) \right \|^2_2}  = \widehat{\mathrm{SND}},
\end{equation*}

which is true by definition, since we defined $ \widehat{\mathrm{SND}}$ as:

\begin{equation*}
\begin{split}
\widehat{\mathrm{SND}}& = \mathrm{SND}(\{\pi_{h,i}\}_{i\in\mathcal{N}}) \\
& = \frac{2}{n(n-1)|O|}\sum_{i=1}^n\sum_{j=i+1}^n \sum_{o\in O }W_2(\pi_{h,i}(o),\pi_{h,j}(o))
\\ & = \frac{2}{n(n-1)|O|}\sum_{i=1}^n\sum_{j=i+1}^n \sum_{o\in O }
\sqrt{\left \|\mu_{h,i}(o) -  \mu_{h,j}(o) \right \|^2_2} .
\end{split}
\end{equation*}

\textbf{Stochastic multivariate normal policies with a homogeneous standard deviation: }
We consider stochastic policies of the type $\pi_h(o) = [\mu_{h}(o),\sigma_h(o)]$, $\pi_{h,i}(o) = [\mu_{h,i}(o),0]$.

From the definition of the policies in \autoref{eq:scaled_policies}, we obtain $\sigma_{i}(o) =  \sigma_{h}(o) + 0$  and thus $\Sigma_{i}(o)= \Sigma_{h}(o)$. The covariance term in $W_2$ from \autoref{eq:wasserstein} can be written as:

$$
\mathrm{trace}(\Sigma_h+\Sigma_h- 2(\Sigma_h^{\frac{1}{2}}\Sigma_h\Sigma_h^{\frac{1}{2}})^{\frac{1}{2}})= 0.
$$

Thus, it can be removed, and all the same steps from the deterministic policies case can be followed to prove the theorem. The insight behind this proof is that: if the standard deviation term is computed by the parameter-shared network only, it does not contribute to the behavioral diversity of the agents' action distributions.

\textbf{Stochastic multivariate normal policies with a heterogeneous standard deviation: }
We consider stochastic policies of the type $\pi_h(o) = [\mu_{h}(o),0]$, $\pi_{h,i}(o) = [\mu_{h,i}(o),\sigma_{h,i}(o)]$. 

From the definition of the policies in \autoref{eq:scaled_policies}, we obtain $\sigma_{i}(o) = 0+ \frac{\mathrm{SND}_\mathrm{des}}{\widehat{\mathrm{SND}}}\sigma_{h,i}(o)$  and, thus, $\Sigma_{i}(o)= \left(\frac{\mathrm{SND}_\mathrm{des}}{\widehat{\mathrm{SND}}}\right )^2\Sigma_{h,i}(o)$ The covariance term in $W_2$ from \autoref{eq:wasserstein} becomes:

\begin{multline*}
\mathrm{trace}\left (\left (\frac{\mathrm{SND}_\mathrm{des}}{\widehat{\mathrm{SND}}}\right )^2\Sigma_{h,i}(o)+\left (\frac{\mathrm{SND}_\mathrm{des}}{\widehat{\mathrm{SND}}}\right )^2\Sigma_{h,j}(o)\right.\\\left.- 2\left (\left(\left (\frac{\mathrm{SND}_\mathrm{des}}{\widehat{\mathrm{SND}}}\right )^2\Sigma_{h,j}(o)\right )^{\frac{1}{2}}\left (\frac{\mathrm{SND}_\mathrm{des}}{\widehat{\mathrm{SND}}}\right )^2\Sigma_{h,i}(o)\left(\left (\frac{\mathrm{SND}_\mathrm{des}}{\widehat{\mathrm{SND}}}\right )^2\Sigma_{h,j}(o)\right )^{\frac{1}{2}}\right )^{\frac{1}{2}}\right ).
\end{multline*}

Since $\Sigma_{h,i}(o)$ and $\Sigma_{h,j}(o)$ are diagonal matrices, we can take the scaling factor to the outer layer of the equation:

$$
\left (\frac{\mathrm{SND}_\mathrm{des}}{\widehat{\mathrm{SND}}}\right )^2
\mathrm{trace}\left (\Sigma_{h,i}(o)+\Sigma_{h,j}(o)- 2\left (\Sigma_{h,j}(o)^{\frac{1}{2}}\Sigma_{h,i}(o)\Sigma_{h,j}(o)^{\frac{1}{2}}\right )^{\frac{1}{2}}\right ).
$$

By substituting $W_2$ in \autoref{eq:proof_snd} we get:

\begin{multline*}
\frac{2}{n(n-1)|O|}\sum_{i=1}^n\sum_{j=i+1}^n \sum_{o\in O }
\left ( \left \| \left ( \mu_h(o) + \frac{\mathrm{SND}_\mathrm{des}}{\widehat{\mathrm{SND}}}\mu_{h,i}(o)\right )-\left ( \mu_h(o) + \frac{\mathrm{SND}_\mathrm{des}}{\widehat{\mathrm{SND}}}\mu_{h,j}(o)\right ) \right \|^2_2 + \right. 
\\ \left. \left (\frac{\mathrm{SND}_\mathrm{des}}{\widehat{\mathrm{SND}}}\right )^2
\mathrm{trace}\left (\Sigma_{h,i}(o)+\Sigma_{h,j}(o)- 2\left (\Sigma_{h,j}(o)^{\frac{1}{2}}\Sigma_{h,i}(o)\Sigma_{h,j}(o)^{\frac{1}{2}}\right )^{\frac{1}{2}} \right ) \right )^{\frac{1}{2}} = \mathrm{SND}_\mathrm{des},
\end{multline*}

which can be rewritten as:


\begin{multline*}
\frac{\mathrm{SND}_\mathrm{des}}{\widehat{\mathrm{SND}}}
\frac{2}{n(n-1)|O|}\sum_{i=1}^n\sum_{j=i+1}^n \sum_{o\in O }
\left ( \left \| \left ( \mu_h(o) + \mu_{h,i}(o)\right )-\left ( \mu_h(o) + \mu_{h,j}(o)\right ) \right \|^2_2 + \right. 
\\ \left. 
\mathrm{trace}\left (\Sigma_{h,i}(o)+\Sigma_{h,j}(o)- 2\left (\Sigma_{h,j}(o)^{\frac{1}{2}}\Sigma_{h,i}(o)\Sigma_{h,j}(o)^{\frac{1}{2}}\right )^{\frac{1}{2}} \right ) \right )^{\frac{1}{2}} = \mathrm{SND}_\mathrm{des},
\end{multline*}

We can simplify $\mathrm{SND}_\mathrm{des}$, yielding:

\begin{multline*}
\frac{2}{n(n-1)|O|}\sum_{i=1}^n\sum_{j=i+1}^n \sum_{o\in O }
\left (\left \|\mu_{h,i}(o) -  \mu_{h,j}(o) \right \|^2_2 + \right. 
\\ \left.  
\mathrm{trace}\left (\Sigma_{h,i}(o)+\Sigma_{h,j}(o)- 2\left (\Sigma_{h,j}(o)^{\frac{1}{2}}\Sigma_{h,i}(o)\Sigma_{h,j}(o)^{\frac{1}{2}}\right )^{\frac{1}{2}} \right )\right)^\frac{1}{2}  = \widehat{\mathrm{SND}},
\end{multline*}

which is true by definition, since we defined $ \widehat{\mathrm{SND}}$ as:

\begin{equation*}
\begin{split}
\widehat{\mathrm{SND}}& = \mathrm{SND}(\{\pi_{h,i}\}_{i\in\mathcal{N}}) \\
&= \frac{2}{n(n-1)|O|}\sum_{i=1}^n\sum_{j=i+1}^n \sum_{o\in O }W_2(\pi_{h,i}(o),\pi_{h,j}(o))
\\ &= 
\!\begin{multlined}[t]
\frac{2}{n(n-1)|O|}\sum_{i=1}^n\sum_{j=i+1}^n \sum_{o\in O }\left ( \left \|\mu_{h,i}(o) -  \mu_{h,j}(o) \right \|^2_2 + 
\right. \\ \left.   \mathrm{trace}\left (\Sigma_{h,i}(o)+\Sigma_{h,j}(o)- 2\left (\Sigma_{h,j}(o)^{\frac{1}{2}}\Sigma_{h,i}(o)\Sigma_{h,j}(o)^{\frac{1}{2}}\right )^{\frac{1}{2}} \right ) \right )^\frac{1}{2}.
\end{multlined}
\end{split}
\end{equation*}

\end{proof}
\section{Additional Method Details}
\label{app:method}

In this section, we provide further details about the methods.

\subsection{DiCo Pseudocode}
\label{app:cbd_pseudocode}

In \autoref{alg:cbd}, we present the pseudocode for policy evaluation in DiCo. This process is depicted in \autoref{fig:cbd_overview}. In lines 2-4, the estimated diversity is initialized at the desired value. In lines 5-8, the training process recomputes it using a soft update. Lastly, in lines 9-11 the multi-agent outputs are computed over the observation batch according to the policies formulation from \autoref{eq:scaled_policies}.

\begin{algorithm}[ht]
   \caption{Policy evaluation in Diversity Control (DiCo).}
   \label{alg:cbd}
\begin{algorithmic}[1]
   \STATE {\bfseries Input:} observation batch $O$, homogeneous policy  $\pi_h$, heterogeneous policies $\pi_{h,0},\ldots,\pi_{h,n}$, soft-update coefficient $\tau$, desired diversity $\mathrm{SND}_\mathrm{des}$
   \IF{Init}
        \STATE $\widehat{\mathrm{SND}} = \mathrm{SND}_\mathrm{des}$
   \ENDIF
   \IF{Training}
        \STATE Compute $\mathrm{SND}(\left \{ \pi_{h,i} \right \}_{i \in \mathcal{N}})$ over $O$
        \STATE $\widehat{\mathrm{SND}} = \tau\mathrm{SND}(\left \{ \pi_{h,i} \right \}_{i \in \mathcal{N}}) +(1-\tau)\widehat{\mathrm{SND}}$
   \ENDIF
   \FOR{$i \in \mathcal{N}$}
        \STATE $\pi_i(O) = \pi_h(O) + \frac{\mathrm{SND}_\mathrm{des}}{\widehat{\mathrm{SND}}}\pi_{h,i}(O)$
   \ENDFOR
\end{algorithmic}
\end{algorithm}

\subsection{Closed-Loop Diversity Control}
\label{app:closed_control}

DiCo provides a new control input ($\mathrm{SND}_\mathrm{des}$) which constrains multi-agent policies to the desired diversity. However, in tasks where a prior on the desired diversity is not available, determining its value might not be straightforward. Furthermore, since training optimizes only for returns, this additional input could be used to optimize a secondary metric of interest (\textit{e.g.}, resilience).

In this section, we present the pseudocode of an algorithm that can be used to automatically control diversity using a closed-loop PID controller~\citep{ang2005pid} to optimize a metric $M$ of interest, given a desired metric value $M_\mathrm{des}$. This metric can be defined by the user to measure properties of interest of the desired system. 
For example, a user might be interested to train a system achieving a certain success in the face of agent failures or noise. To achieve the desired result, they would design a metric $M$ that reflects the desired properties and use it to optimize $\mathrm{SND}_\mathrm{des}$. The code is reported in \autoref{alg:closed_loop}.

Lines 4-11 iteratively train the system at different diversity levels $\mathrm{SND}_\mathrm{des}$, trying to minimize the error $e = M_{\mathrm{des}}-M$. The PID terms of the controller take into account the error variation  over time and its accumulated value. In line 11, the iteration process is terminated as soon as the error falls
within a predetermined bound $-\epsilon < e < \epsilon$.

\begin{algorithm}[ht]
   \caption{PID control loop to optimize a desired metric using DiCo.}
   \label{alg:closed_loop}
\begin{algorithmic}[1]
   \STATE {\bfseries Input:} metric to optimize $M$, desired metric value $M_{\mathrm{des}}$, initial desired diversity $\mathrm{SND}_{\mathrm{init}}$, error tolerance $\epsilon$, PID control parameters $K_p,K_d,K_i$
   \STATE $\mathrm{SND}_{\mathrm{des}} = \mathrm{SND}_{\mathrm{init}}$
   \STATE $e=e_{\mathrm{accu}}=0$
   \REPEAT
        \STATE Train agents at $\mathrm{SND}_{\mathrm{des}}$ over multiple seeds
        \STATE Compute aggregate metric $M$ over training runs
        \STATE $e_\mathrm{prev} = e$
        \STATE Compute error $e = M_{\mathrm{des}}-M$
        \STATE $e_{\mathrm{accu}}= e+e_{\mathrm{accu}}$ 
        \STATE $\mathrm{SND}_{\mathrm{des}} =\max( K_p e + K_d (e - e_\mathrm{prev})+K_ie_{\mathrm{accu}},0)$

   \UNTIL{$-\epsilon < e < \epsilon$}
\end{algorithmic}
\end{algorithm}

\subsection{Maximum $\mathrm{SND}$ for a Bounded Action Space}
\label{app:max_snd}
$\mathrm{SND}$ is a metric that ranges from $0$ to infinity. When considering bounded action spaces and deterministic actions ($\pi_i=[\mu_i]$), the metric can furtherly be bounded\footnote{In the case of stochastic actions, $\sigma_i(o)$ is not bounded and thus diversity can go to infinity in its growth.}. 

Given an action domain with a lower bound $a_{\mathrm{min}}\in \R^m$, an upper bound $a_{\mathrm{max}}\in \R^m$, and a set of actions $\mu_i\in \R^m, \forall i \in \mathcal{N}$, the maximum $\mathrm{SND}$ is the solution to the following convex problem:
\begin{center}
\begin{equation*}
        \max_{\{\mu_i\}_{i\in \mathcal{N}}}  \quad \frac{2}{n(n-1)} \sum_{i=1}^n  \sum_{j=i+1}^n\left \| \mu_i-\mu_j\right \|_2 
\end{equation*}
s.t.
\begin{equation*}
        a_{\mathrm{min}}^k \leq \mu_i^k \quad \forall k \in [1,\ldots,m], \forall i \in\mathcal{N}
\end{equation*}
\begin{equation*}
        \mu_i^k  \leq a_{\mathrm{max}}^k  \quad  \forall k \in [1,\ldots,m], \forall i \in\mathcal{N}.
\end{equation*}
\end{center}

This problem can be solved using open-sourced and free to use convex optimization solvers, such as CVXPY~\cite{diamond2016cvxpy}.

\subsection{Disincentivizing Diversity Outside the Action Space}

In the context of bounded action spaces $[a_{\mathrm{min}},a_{\mathrm{max}}]$ with $a_{\mathrm{min}},a_{\mathrm{max}}\in \R^m$, and deterministic actions, increasing the diversity of
 the rescaled policies from \autoref{eq:scaled_policies} might lead to action means $\mu_i$ outside the action space boundaries.
Since actions are clamped to the domain bounds, placing diversity outside of the action bounds could be leveraged by the learning agents to bypass the constraint. This is because, placing an increasing amount of diversity in a specific observation, will correspond to the same diversity in the actual actions taken only if the actions are within the domain.

To discourage diversity outside the action bounds, we employ an additional objective:
\begin{equation}
    \mathcal{L}_\mathrm{act}(o) =\max_i \left ( \left \| \mu_i(o) - \mathrm{clip}(\mu_i(o),a_{\mathrm{min}},a_{\mathrm{max}})\right \|_2\right ),
\end{equation}
which penalizes the maximum overflowing action norm over the agents. 

\section{Soft Update $\tau$ Comparison}
\label{app:tau_comparison}

To further evaluate the soft-update mechanism proposed in \autoref{sec:scaling}, we run a comparison across different values of $\tau$ and report the results in \autoref{fig:tau_comparison}. The comparison has been run on the \textit{Multi-Agent Navigation} scenario with $\mathrm{SND}_\mathrm{des} = 1$ for $\tau=0.1,0.01,0.001,0.001$ with 4 different seeds for each value.

It is possible to observe that lower values of $\tau$ result in lower variance of the measured $\mathrm{SND}$ at the price of an overshoot in the mean value, taking longer to converge. Higher values of $\tau$, on the other hand, present a lower error in tracking the mean, at the price of a higher variance and instability around the target value in the earlier stages of training.
The results of this comparison empirically confirm the claim made in \autoref{sec:scaling}.

\begin{figure}[h]
\begin{center}
\centerline{\includegraphics[width=0.4\linewidth]{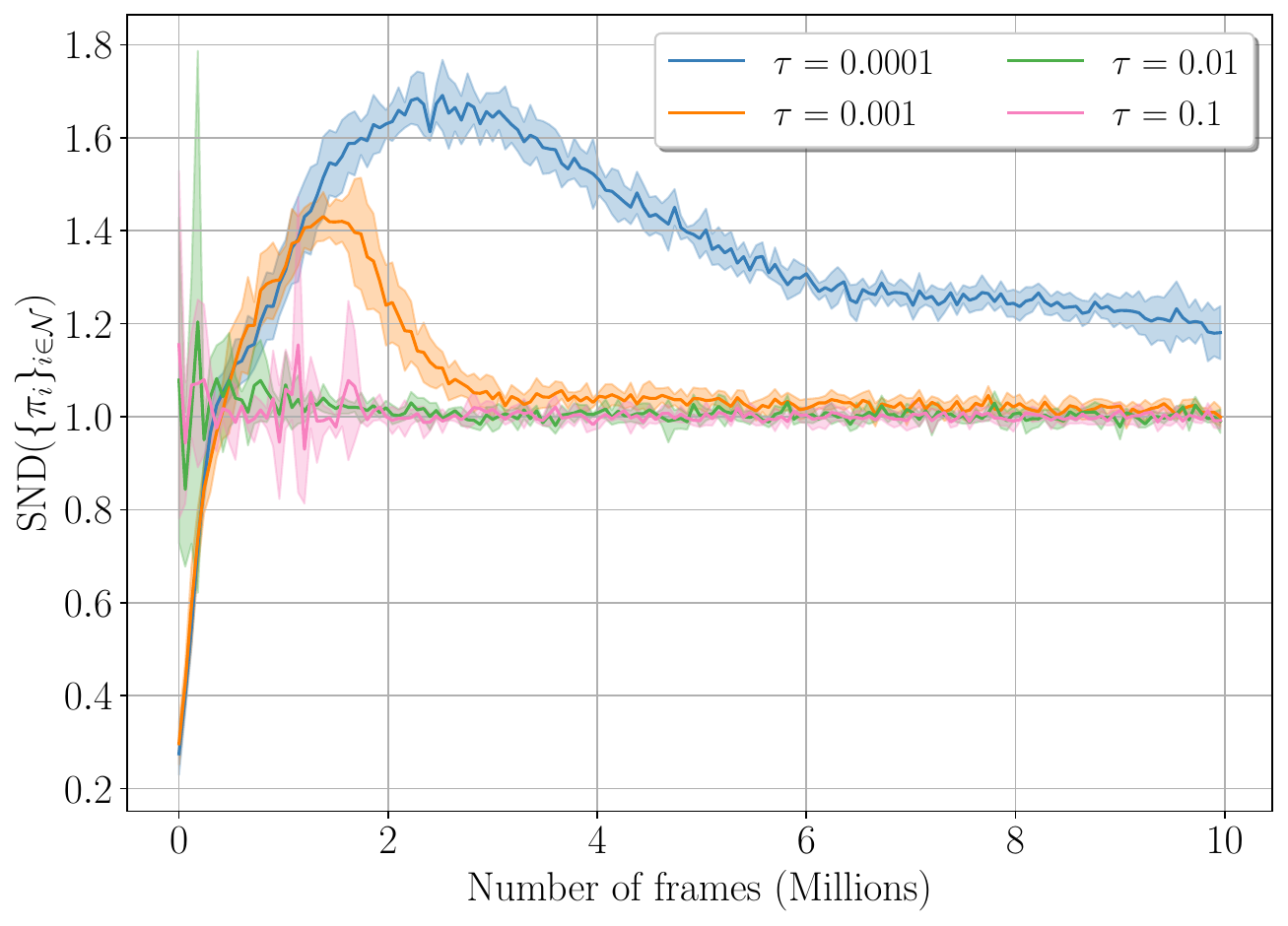}}
\caption{$\tau$ comparison on the \textit{Multi-Agent Navigation} scenario with $\mathrm{SND}_\mathrm{des} = 1$. The comparison shows that lower values of $\tau$ result in lower variance but overshoot the desired diversity, while higher values track the desired diversity with higher variance. Curves report mean and standard deviation for the IPPO algorithm over 4 training seeds.}
\label{fig:tau_comparison}
\end{center}
\end{figure}

\newpage
\section{Further Analyses from Case Study: \textit{Multi-Agent Navigation}}
\label{app:case_study}
This section presents additional experiments and result analyses on the \textit{Multi-Agent Navigation} case study.

\subsection{Setting the Desired Diversity Too High}
\label{app:div_too_high}
\begin{figure*}[th]
\begin{center}
\centerline{\includegraphics[width=\textwidth]{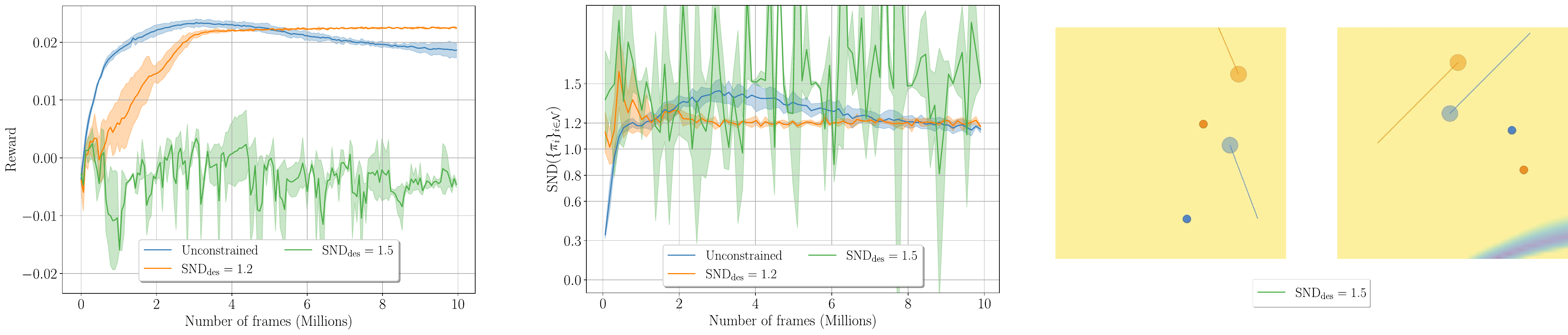}}
\caption{Illustration of the \textit{Multi-Agent Navigation} case study with an example of a high diversity requirement complementing \autoref{fig:nav_case_study}. \textbf{Left:} Mean instantaneous reward for agents trained with different desired diversities. \textbf{Center:} $\mathrm{SND}(\left \{ \pi_{i} \right \}_{i \in \mathcal{N}})$ evaluated  for agents trained with different desired diversities. \textbf{Right:} Renderings of the diversity distribution over the observation space for agents constrained at a high desired diversity. Curves report mean and standard deviation over 4 training seeds.}
\label{fig:nav_case_study_too_much}
\end{center}
\end{figure*}

In \autoref{sec:case_study} we have shown how DiCo can be used to control diversity to different desired values and the impact that this method had on learning in the \textit{Multi-Agent Navigation} task. 
It is however important to discuss what are the implications of setting the desired diversity ``too high''. 
In \autoref{fig:nav_case_study_too_much}, we report the results for training runs with $\mathrm{SND}_\mathrm{des}=1.5$. It is shown how such a diversity value leads to unstable learning in this scenario, with agent policies not able to converge to the desired behavior. This is due to the fact that such a high diversity demand prevents the agents from acting homogeneously even in the observations where this is needed and, thus, never experience and learn from such transitions. Therefore, the agents are not able to learn to distribute diversity over the observation space, resulting in the almost uniform diversity distributions shown in the right of \autoref{fig:nav_case_study_too_much}.

Determining what characterizes an excessive desired diversity depends on the task. To estimate this, it is possible to perform multiple training runs at different desired diversity levels, observing the threshold above which training stability is compromised. To get an estimate of a representative diversity range for a task, it is also possible to train agents with unconstrained diversity and measure their SND value. Lastly, in tasks with bounded action spaces, solving the optimization problem introduced in \autoref{app:max_snd} yields an upper bound to the maximum diversity obtainable in the task.

\subsection{Navigating to the Same Goal}

\begin{figure*}[ht]
\begin{center}
\centerline{\includegraphics[width=\textwidth]{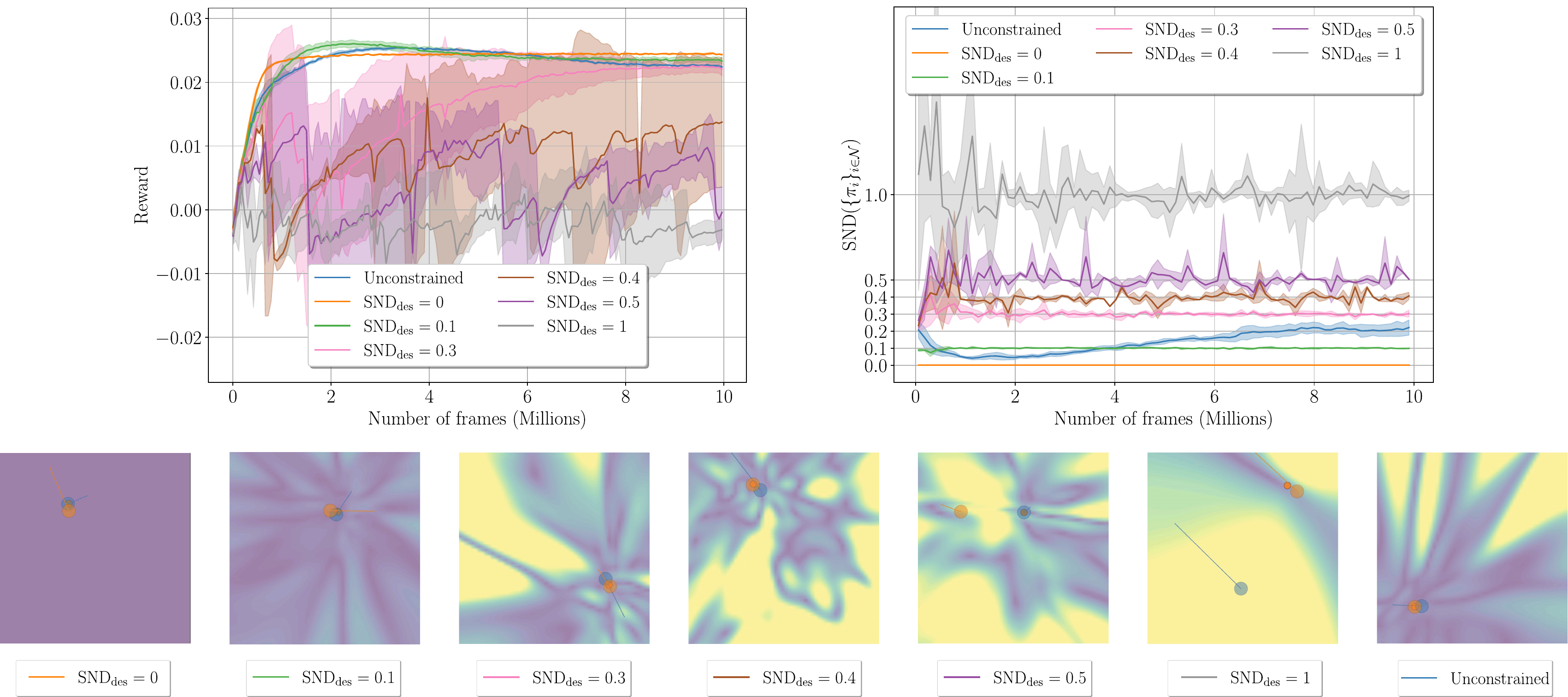}}
\caption{Illustration of the \textit{Multi-Agent Navigation} case study when agents need to navigate to the same goal.  \textbf{Top left:} Mean instantaneous reward for agents trained with different desired diversities. \textbf{Top right:} $\mathrm{SND}(\left \{ \pi_{i} \right \}_{i \in \mathcal{N}})$ evaluated  for agents trained with different desired diversities. \textbf{Bottom:} Renderings of the diversity distribution over the observation space (colormap legend available in \autoref{fig:nav_case_study}) for agents trained with different desired diversities. Agents with a low diversity budget are able to reach the goal with similar policies. As the desired diversity increases, agents need to find different strategies to achieve the same objective, trading off performance for diversity. All strategies present the common characteristic of placing diversity away from the observations near the goal, where it is most important to act homogeneously. Curves report mean and standard deviation for the IPPO algorithm over 4 training seeds.}
\label{fig:nav_case_study_same_goal}
\end{center}
\end{figure*}

The \textit{Multi-Agent Navigation} task, discussed in \autoref{sec:case_study}, represents an example of a task where a certain degree of diversity is needed for successful completion. In this section, we are interested in analyzing a variation of this task where the optimal policy requires homogeneous behavior. In particular, we modify the task such that both agents need to navigate to the same goal (still spawned at random). In this setup, the optimal policy consists in all agents traveling towards the observed goal, requiring a diversity of 0.

As before, we train agents at different desired diversity levels ($\mathrm{SND}_\mathrm{des}=0,0.1,0.3,0.4,0.5,1$) and with unconstrained heterogeneous policies (\autoref{eq:unscaled_policies}). The results are reported in \autoref{fig:nav_case_study_same_goal}. As expected, the measured diversity ($\mathrm{SND}(\{\pi_i\}_{i\in\mathcal{N}})$), shown in the top right, matches the desired values. Looking at the reward curves in the top left, we confirm the hypothesis that heterogeneity is detrimental to performance in this task. In fact, we observe a decrease in reward and training stability as the desired diversity increases, with the best performance achieved by the homogeneous model ($\mathrm{SND}_\mathrm{des} = 0$). It is also noticeable that all models learn to place lower diversity in positions nearing the goal, trying to place the unwanted diversity budget far from this region. Learning this strategy becomes harder as the diversity budget increases, and starts becoming impossible at high diversity values ($\mathrm{SND}_\mathrm{des} = 1$), where only one agent is able to reach the goal and the other is forced to take completely different actions.

We also note that, the unconstrained model, while learning the optimal policies, presents a diversity value varying through training. This is due to the bigger policy search space deriving from the lack of constraints and leads to slower convergence and more instability than low-diversity constrained models ($\mathrm{SND}_\mathrm{des} = 0,0.1$), showcasing another advantage of DiCo, that, thanks to our prior on the heterogeneity required by the task, enforces a low diversity and simplifies learning.

\newpage
\section{\textit{Dispersion} Trajectory Analysis}
\label{app:dispersion}

\begin{figure*}[th]
    \centering
    \subfigure[{$\mathrm{SND}_\mathrm{des}=0$}.]{
        \includegraphics[width=\textwidth]{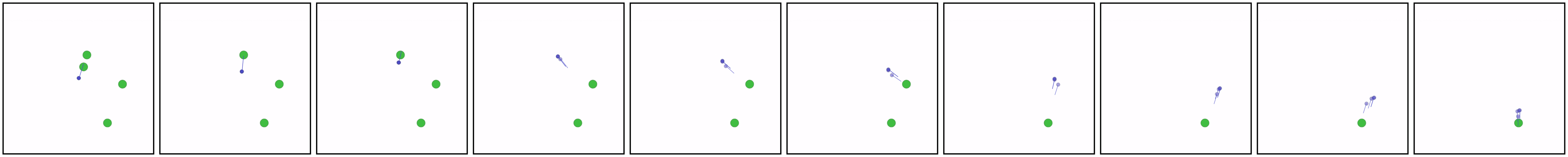}
        \label{fig:dispersion_0_traj}
    }
    \subfigure[{Unconstrained}.]{
        \includegraphics[width=\textwidth]{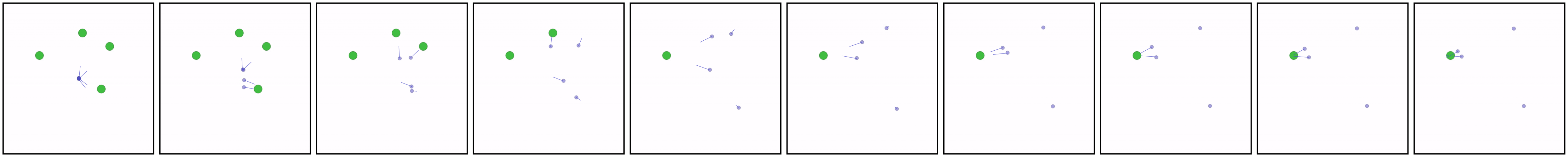}
        \label{fig:dispersion_-1_traj}
    }
     \subfigure[{$\mathrm{SND}_\mathrm{des}=6$}.]{
        \includegraphics[width=\textwidth]{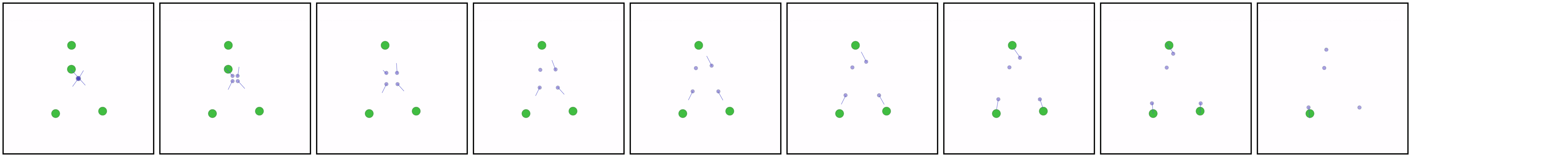}
        \label{fig:dispersion_6_traj}
    }
    \caption{\textit{Dispersion} trajectory renderings from the results in \autoref{fig:dispersion_plots}. Rollouts evolve from left to right. Videos of these trajectories are available on the paper \href{\website}{website}.}
    \label{fig:dispersion_traj}
\end{figure*}

In this section, we complement the \textit{Dispersion} results discussed in \autoref{sec:dispersion} by analyzing the trajectories of the models trained in \autoref{fig:dispersion_plots}. In \autoref{fig:dispersion_traj}, we report the trajectory renderings.

\autoref{fig:dispersion_0_traj} shows the trajectory of homogeneous agents ($\mathrm{SND}_\mathrm{des}=0$). We can observe that homogeneous agents learn to visit one food particle at a time in a group. This is because, given the same observation, they need to take the same action, and, starting in the same position, this process leads them to all take the same actions and visit the same positions. The small variations in positions, visible from the \nth{4} frame onwards, are due to the exploration noise added to incentivize training exploration, that, however, does not help agents learn different policies. 

\autoref{fig:dispersion_-1_traj} shows the trajectory of unconstrained heterogeneous agents (\autoref{eq:unscaled_policies}). These agents are able to tackle different goals, however, we can see that the learned policy is still suboptimal. In fact, while two agents respectively consume the two food particles at the top, the remaining two tackle the same particle in the bottom right. After these three particles have been consumed, two agents (one from the top and one from the bottom) both navigate to the last particle. This behavior obtains a suboptimal reward as it would have taken less time to send each agent to a different particle.

\autoref{fig:dispersion_6_traj} shows the trajectory of heterogeneous agents with constrained diversity $\mathrm{SND}_\mathrm{des}=6$. This diversity constraint forces agent to a higher SND than the one achieved by the unconstrained heterogeneous model. Thanks to this, agents are able to discover sooner the optimal policy. We can see this in the rollout, where agents immediately travel to a different food particle, thus completing the task in the least time needed.

\newpage
\section{\textit{Sampling} Trajectory Analysis}
\label{app:sampling}

\begin{figure*}[th]
    \centering
    \subfigure[{$\mathrm{SND}_\mathrm{des}=0$}.]{
        \includegraphics[width=\textwidth]{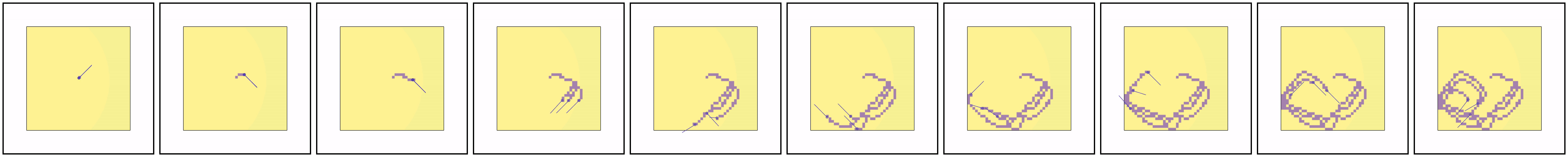}
        \label{fig:sampling_0_traj}
    }
    \subfigure[{Unconstrained}.]{
        \includegraphics[width=\textwidth]{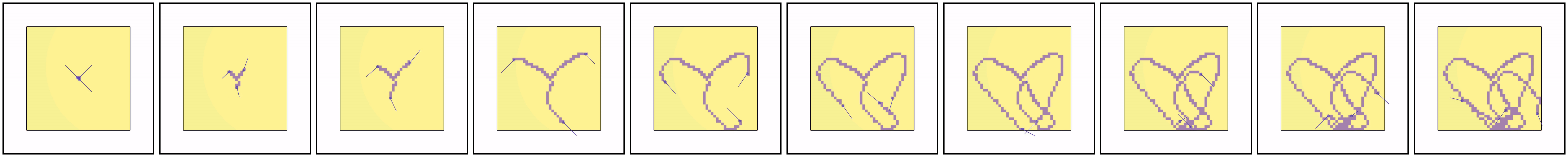}
        \label{fig:sampling_-1_traj}
    }
     \subfigure[{$\mathrm{SND}_\mathrm{des}=5$}.]{
        \includegraphics[width=\textwidth]{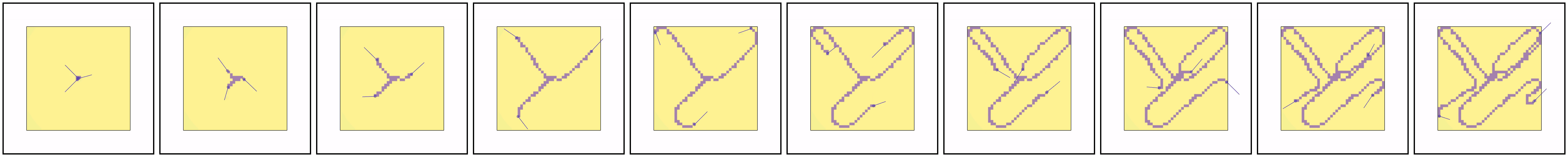}
        \label{fig:sampling_5_traj}
    }
    \caption{\textit{Sampling} trajectory renderings from the results in \autoref{fig:sampling_plots}. Rollouts evolve from left to right. Videos of these trajectories are available on the paper \href{\website}{website}.}
    \label{fig:sampling_traj}
\end{figure*}

In this section, we complement the \textit{Sampling} results discussed in \autoref{sec:sampling} by analyzing the trajectories of the models trained in \autoref{fig:sampling_plots}. In \autoref{fig:sampling_traj}, we report the trajectory renderings.

\autoref{fig:sampling_0_traj} shows the trajectory of homogeneous agents ($\mathrm{SND}_\mathrm{des}=0$).
We can observe that homogeneous agents visit the same area of the sampling space while remaining grouped together. This is because, given the same observation, they need to take the same action, and, starting in the same position, this process leads them to all take the same actions and visit the same positions. The small variations in positions, visible from the \nth{4} frame, are due to the exploration noise added to incentivize training exploration, that, however, does not help agents learn different policies. 

\autoref{fig:sampling_-1_traj} shows the trajectory of unconstrained heterogeneous agents (\autoref{eq:unscaled_policies}). These agents can take different actions for the same observation and, thanks to this, they are able to begin the rollout by moving in different directions. However, from the \nth{5} frame onward, they begin converging towards the same area (in the bottom center of the workspace). By the end of the task, they present a significantly improved coverage than their homogeneous counterparts, but still with some cells that have been visited more than once by different agents, making their policy suboptimal.

\autoref{fig:sampling_5_traj} shows the trajectory of heterogeneous agents with constrained diversity $\mathrm{SND}_\mathrm{des}=5$. This diversity constraint forces agents to a higher SND than the one achieved by the unconstrained heterogeneous model. Under this constraint, the agents achieve a better spread over the sampling space, moving along the workspace diagonals following a zig-zag pattern. They then move back diagonally without ever crossing paths.

An interesting observation can be made by focusing on the first frame of all renderings in \autoref{fig:sampling_traj}. Here we can observe all agents being spawned in the same position in the center of the workspace at the beginning of the task. This is a particularly interesting observation, as all agents are observing the same policy input. This is also one of the observations where they need to act most differently in order to avoid sampling the same cells in the subsequent timestep. As expected, homogeneous agents all take the same action, which is quite suboptimal, leading them all to the same next state. Unconstrained heterogeneous agents improve upon this, with their actions forming two $90$º and one $180$º angles. Constrained heterogeneous agents show the best action spread, almost resembling three $120$º angles.

\newpage
\section{\textit{Tag} Trajectory Analysis}
\label{app:tag}

\begin{figure*}[th]
    \centering
    \subfigure[{$\mathrm{SND}_\mathrm{des}=0$}.]{
        \includegraphics[width=\textwidth]{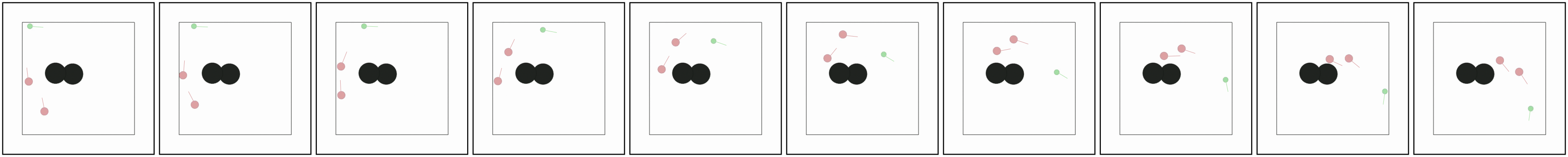}
        \label{fig:tag_0_traj}
    }
    \subfigure[{Unconstrained}.]{
        \includegraphics[width=\textwidth]{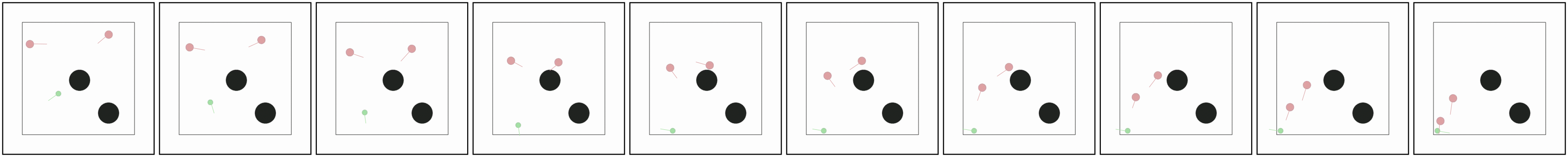}
        \label{fig:tag_-1_traj}
    }
    \subfigure[Ambush emergent strategy ({$\mathrm{SND}_\mathrm{des}=0.6$}).]{
        \includegraphics[width=\textwidth]{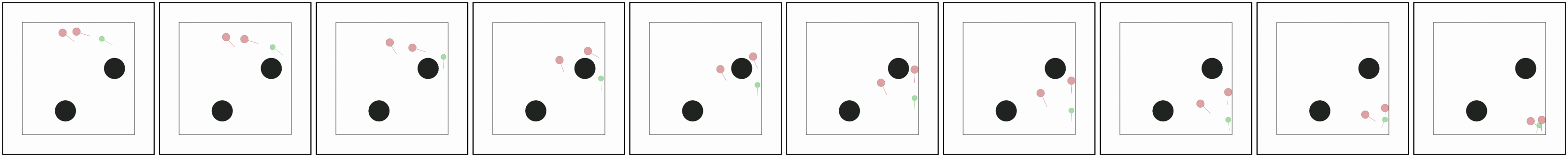}
        \label{fig:tag_ambush_traj}
    }
     \subfigure[Cornering emergent strategy ({$\mathrm{SND}_\mathrm{des}=0.6$}).]{
        \includegraphics[width=\textwidth]{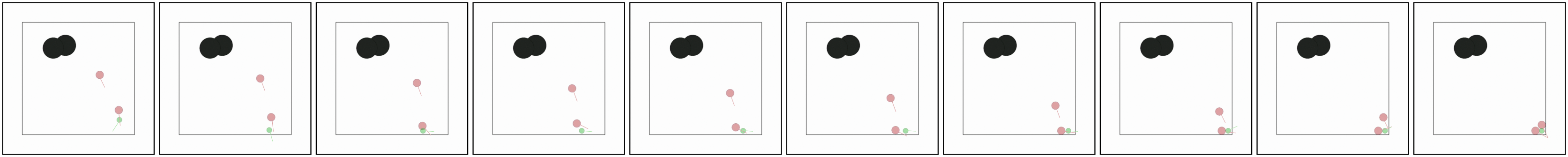}
        \label{fig:tag_cornering_traj}
    }
     \subfigure[Blocking emergent strategy ({$\mathrm{SND}_\mathrm{des}=0.6$}).]{
        \includegraphics[width=\textwidth]{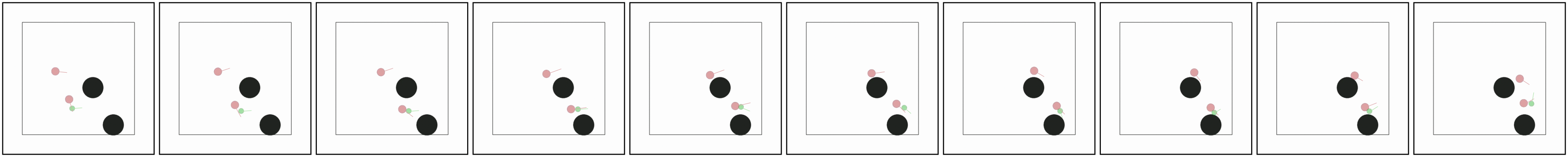}
        \label{fig:tag_blocking_traj}
    }
    \caption{\textit{Tag} trajectory renderings from the results in \autoref{fig:tag_plots}. Rollouts evolve from left to right. Videos of these trajectories are available on the paper \href{\website}{website}.}
    \label{fig:tag_traj}
\end{figure*}

In this section, we complement the \textit{Tag} results discussed in \autoref{sec:tag} by analyzing the trajectories of the models trained in \autoref{fig:tag_plots}. In \autoref{fig:tag_traj}, we report the trajectory renderings.

In \autoref{fig:tag_0_traj} and \autoref{fig:tag_-1_traj} we report the trajectories of homogeneous agents ($\mathrm{SND}_\mathrm{des}=0$) and unconstrained heterogeneous agents (\autoref{eq:unscaled_policies}).
Both models present a similar strategy, consisting in both red agents navigating towards the green one, following the shortest path. This strategy is well known to be suboptimal in ball games like soccer, where inexperienced players tend to group on the ball, resulting in poor spatial coverage.

Continuing with this analogy, we take on a coaching role and constrain agents to a higher diversity ($\mathrm{SND}_\mathrm{des}=0.6$). We observe that this higher diversity constraint induces higher achieved rewards and emergent strategies that leverage agent complementarity. We highlight three examples of such strategies (\autoref{fig:tag_ambush_traj}, \autoref{fig:tag_cornering_traj}, \autoref{fig:tag_blocking_traj}).

In \autoref{fig:tag_ambush_traj}, we observe how the red agents are able to perform an ``ambush'' maneuver, splitting around the obstacle (from the \nth{3} frame). While the right agent performs a close chase, the left agent takes a longer trajectory, which, in the long term, enables it to cut off the green agent. We can see how diversity helped in this scenario: if both agents followed the evader into the passage, the green agent would have had a free escape route on the other side and the chase would have continued in a loop as in \autoref{fig:tag_0_traj}.

In \autoref{fig:tag_cornering_traj}, we observe how the red agents are able to coordinate in a pinching maneuver, cornering the evader in an inescapable state. While one agent lures the evader into a corner (frames 1 to 7), the other agent approaches, progressively closing the escape routes available to the green agent. Eventually, the green agent remains stuck in the corner without any possibility of movement.

Lastly, in \autoref{fig:tag_blocking_traj} we report a strategy where one red agent focuses on blocking the escape routes of the green agent, while the other red agent gathers rewards. In this strategy, more subtle to understand, the red agent in the top performs man-to-man marking (similar to what is done in sports) at a distance. As the green agent moves, this agent tracks its position, blocking access to the top half of the environment by being ready to intercept any movement in that direction.
In doing so, it effectively reduces the maneuvering space available to the green agent, making tagging easier for the other red agent.

\newpage
\section{\textit{Reverse Transport}: Injecting a Prior Through Diversity Constraints}
\label{sec:rev_transport}

\begin{figure}[t]
\begin{center}
\centerline{\includegraphics[width=\linewidth]{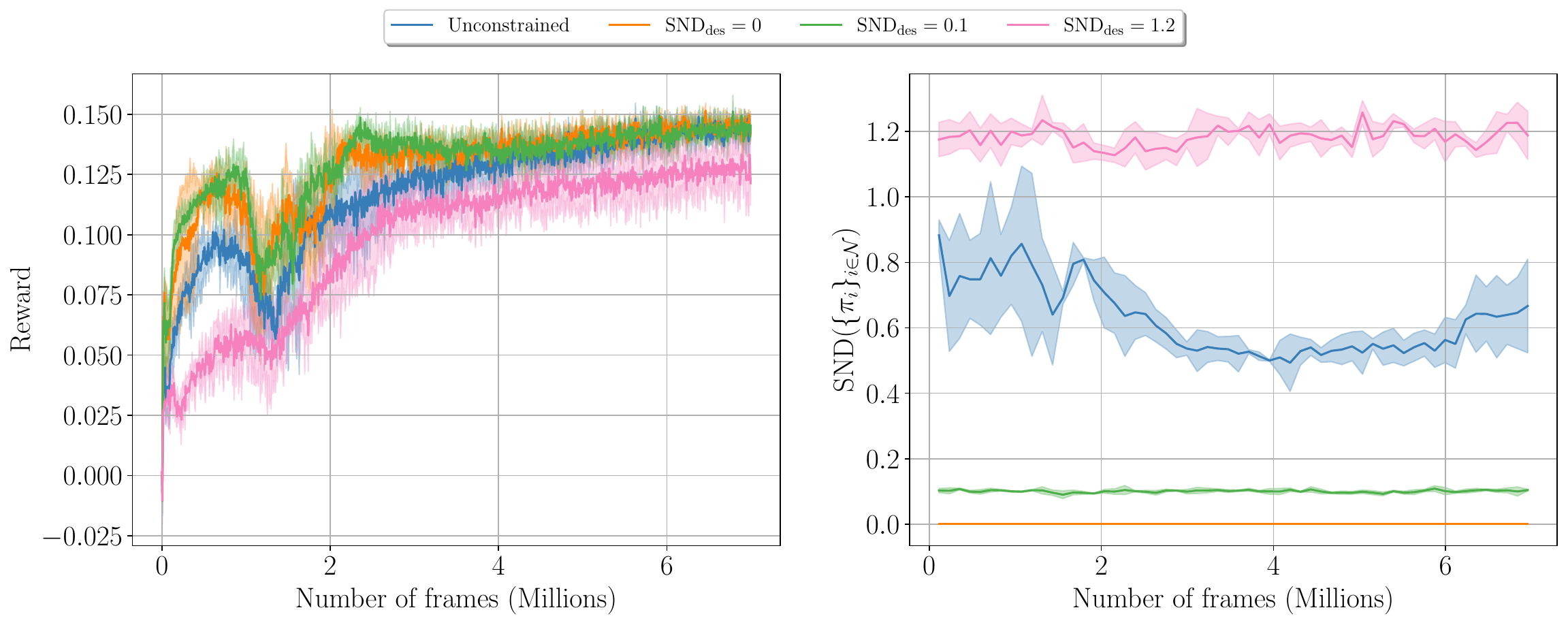}}
\caption{Results from training agents with different constraints on the \textit{Reverse Transport} task. \textbf{Left:} Mean instantaneous reward. \textbf{Right:} Measured diversity $\mathrm{SND}(\left \{ \pi_{i} \right \}_{i \in \mathcal{N}})$. 
Curves report mean and standard deviation for the IDDPG algorithm over 3 training seeds.}
\label{fig:rev_transport_plots}
\end{center}
\end{figure}

In this work, we have mainly shown applications where diversity constraints that force a \textit{higher} diversity than the one achieved by unconstrained policies can be used to obtain higher performance. In this section, we show that, by leveraging user priors about the role of heterogeneity in a task, it is also possible to improve the training process by forcing a \textit{lower} diversity than the one achieved by the unconstrained method. 

We consider the \textit{Reverse Transport} (\autoref{fig:rev_transport}) task. In this task, $n$ agents are placed inside a square red package, spawned at random. The agents are collectively rewarded for pushing the package towards a randomly spawned goal. 
They observe their position, velocity, relative distance to the package, and the package's relative distance to the goal.
Agent actions are 2D forces that determine their motion.
Our hypothesis is that the optimal policy in this task requires little heterogeneity, as it is beneficial for the agents to all perform the same pushing action.

We train $n=4$ agents using IDDPG with different DiCo constraints, as well as unconstrained heterogeneous polices.
In \autoref{fig:rev_transport_plots}, we report the mean instantaneous training reward and the measured diversity.
The results show that the agents with a lower diversity ($\mathrm{SND}_\mathrm{des} = 0,0.1$) achieve the optimal reward faster. Unconstrained heterogeneous policies learn a higher diversity, resulting in slower convergence. Furthermore, DiCo models with a diversity higher than the unconstrained ones ($\mathrm{SND}_\mathrm{des} = 1.2$)  show even more suboptimal curves, confirming our prior on the required homogeneity for the task.  
This shows that the greater sample efficiency of homogeneous agents can be leveraged by setting low DiCo constraints in tasks where a prior on the required heterogeneity is available. While the policies with $\mathrm{SND}_\mathrm{des} = 0.1$ achieve similar rewards to the homogeneous ones, they allow for a small degree of diversity, which might be preferred due to its intrinsic resilience properties~\cite{bettini2023hetgppo}.

\newpage
\section{Analytical Diversity Control}
\label{app:finn}

In this paper, we control diversity by scaling deviations of heterogeneous agents from a homogeneous policy, normalized by an empirically calculated $\widehat{\mathrm{SND}}$. The set of observations $O$ over which this $\widehat{\mathrm{SND}}$ is calculated dictates the distribution of states where the heterogeneity constraint is enforced. In many applications, this is useful---by selecting $O$ as a set of observations from rollouts, we can ensure that all heterogeneity is placed in regions that will actually be visited. However, there exist other applications where one might wish to define an analytical diversity constraint over the whole observation space (note that this can result in a different measured $\mathrm{SND}(\{\pi_i\}_{i\in\mathcal{N}})$ than the given $\mathrm{SND}_\mathrm{des}$, as samples from rollouts do not form a uniform distribution over the observation space).

In this section, we propose a method for applying \textit{analytical} diversity constraints to a system. The core of this approach is an architecture that fundamentally incorporates a diversity constraint, without the need to perform a sum over a set of observations $O$, as in \autoref{eq:snd}. In order to formulate this new method, we must first redefine an integral-based form of $\mathrm{SND}$:

\begin{equation}
    \mathrm{SND}_I(\left \{ \pi_i \right \}_{i \in \mathcal{N}}) = \frac{2}{n(n-1)|\mathcal{O}|}\sum_{i=1}^n\sum_{j=i+1}^n \int_{\mathcal{O}}W_2(\pi_i(o),\pi_j(o)) \, do.
    \label{eq:snd_continuous}
\end{equation}

For convenience, we also recall the definition of $\mathrm{SND}_o$ from \autoref{sec:case_study}, that measures $\mathrm{SND}$ for a given observation $o$:

\begin{equation}
   \mathrm{SND}_{o}(o)= \mathrm{SND}_o(\left \{ \pi_{h,i} \right \}_{i \in \mathcal{N}},o) =
  \frac{2}{n(n-1)}\sum_{i=1}^n\sum_{j=i+1}^n W_2(\pi_{h,i}(o),\pi_{h,j}(o)).
\label{eq:sndhat_continuous}
\end{equation}

Now, we must find a way to compute $\int_{\mathcal{O}}W_2(\pi_i(o),\pi_j(o)) \, do$, the integral of an expression containing learned functions $\{ \pi_i \}_{i \in \mathcal{N}}$ over the entire observation space. To do this, we use a Fixed Integral Neural Network (FINN) \cite{finn}, a method that enables the analytical evaluation of a learned multivariate function $f(x)$ and its integral $F(x) |_A = \int_A f(x) \, dx$. In this application, $f$ represents the amount of system heterogeneity allocated to a given observation:

\begin{equation}
    f(o) = \frac{2}{n(n-1)}\sum_{i=1}^n\sum_{j=i+1}^n W_2(\pi_i(o),\pi_j(o)).
    \label{eq:finn_f}
\end{equation}

Integrating $f(o)$ over the observation space $O$, $\frac{F(o)|_\mathcal{O}}{|\mathcal{O}|}$ is equivalent to $\mathrm{SND}_I$ from \autoref{eq:snd_continuous}. Rescaling to achieve a particular diversity is handled by FINN, which can internally place constraints over $F$. We supply an equivalence constraint with the value $|\mathcal{O}| \, \mathrm{SND}_\mathrm{des}$, ensuring $\frac{F(o)|_\mathcal{O}}{|\mathcal{O}|}=\mathrm{SND}_I(\{\pi_i\}_{i\in\mathcal{N}})=\mathrm{SND}_\mathrm{des}$. Using FINN's positivity constraint, we further constrain $f$ to be strictly non-negative, as it is impossible for an $\mathrm{SND}_I$ to be less than $0$.

Given these preliminaries, we can reformulate the policies in terms of $f$:

\begin{equation}
    \pi_i(o) = \pi_h(o) + \frac{f(o)}{\mathrm{SND}_o(o)}\pi_{h,i}(o).
    \label{eq:finn_pi}
\end{equation}

\begin{theorem}
    Using the analytical formulation of DiCo in \autoref{eq:finn_pi} and the definition of $\mathrm{SND}_I$ in \autoref{eq:snd_continuous}, the property $\mathrm{SND}_I(\{\pi_i\}_{i\in\mathcal{N}}) = \mathrm{SND}_\mathrm{des}$ is satisfied.
\end{theorem}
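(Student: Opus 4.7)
The plan is to mirror the structure of the proof of Theorem 5.1, but applied pointwise at each observation $o$ and then integrated over $\mathcal{O}$. The substitution of the policies from \autoref{eq:finn_pi} into the inner $W_2$ term should collapse the homogeneous component and pull out the scaling factor $f(o)/\mathrm{SND}_o(o)$, leaving an expression that is proportional to the heterogeneous pairwise distance $W_2(\pi_{h,i}(o),\pi_{h,j}(o))$. Since the scaling factor depends only on $o$ (not on $i,j$), it factors out of the sum over agent pairs, after which the sum itself is exactly $\mathrm{SND}_o(o)$ by \autoref{eq:sndhat_continuous}.

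First, I would swap the integral and the sum over agent pairs in \autoref{eq:snd_continuous} (both are finite, and $W_2$ is non-negative, so Tonelli applies) to write
\begin{equation*}
\mathrm{SND}_I(\{\pi_i\}_{i\in\mathcal{N}}) = \frac{1}{|\mathcal{O}|}\int_{\mathcal{O}} \frac{2}{n(n-1)}\sum_{i=1}^n\sum_{j=i+1}^n W_2(\pi_i(o),\pi_j(o))\, do.
\end{equation*}
Next, I would apply the scaling identity proved inside \autoref{thm:scaling} (and generalized in \autoref{thm:general_metric}) at each fixed $o$: because the same homogeneous term $\pi_h(o)$ is added to both arguments of $W_2$, and because the heterogeneous deviations are multiplied by the common non-negative scalar $f(o)/\mathrm{SND}_o(o)$, we obtain
\begin{equation*}
W_2\bigl(\pi_i(o),\pi_j(o)\bigr) = \frac{f(o)}{\mathrm{SND}_o(o)}\, W_2\bigl(\pi_{h,i}(o),\pi_{h,j}(o)\bigr).
\end{equation*}
Pulling this factor out of the sum over pairs and recognizing the remaining sum as $\mathrm{SND}_o(o)$ via \autoref{eq:sndhat_continuous} yields the pointwise identity
\begin{equation*}
\frac{2}{n(n-1)}\sum_{i=1}^n\sum_{j=i+1}^n W_2(\pi_i(o),\pi_j(o)) = f(o).
\end{equation*}

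Finally, I would close the argument by integrating and invoking the FINN equivalence constraint: by construction, $F(o)|_{\mathcal{O}} = \int_{\mathcal{O}} f(o)\, do = |\mathcal{O}|\,\mathrm{SND}_\mathrm{des}$, so
\begin{equation*}
\mathrm{SND}_I(\{\pi_i\}_{i\in\mathcal{N}}) = \frac{1}{|\mathcal{O}|}\int_{\mathcal{O}} f(o)\, do = \mathrm{SND}_\mathrm{des}.
\end{equation*}
The only delicate point is ensuring the scaling factor $f(o)/\mathrm{SND}_o(o)$ is well-defined whenever it is used. On the set where $\mathrm{SND}_o(o)=0$ the heterogeneous deviations are pairwise $W_2$-identical, so $f(o)$ must also vanish there (it is forced to $0$ by the non-negativity FINN constraint together with consistency of the pointwise identity), and the factor can be interpreted as $0$ without affecting the final $W_2$ value. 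I expect this measure-zero technicality to be the main obstacle worth a sentence of care; the algebraic backbone is a direct analogue of the discrete proof already given for \autoref{thm:scaling}.
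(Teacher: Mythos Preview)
Your proposal is correct and follows essentially the same route as the paper: substitute the scaled policies into $\mathrm{SND}_I$, use translation invariance and positive homogeneity of $W_2$ to pull out $f(o)/\mathrm{SND}_o(o)$, recognize the remaining pairwise sum as $\mathrm{SND}_o(o)$, cancel, and invoke the FINN integral constraint. The only cosmetic difference is that the paper re-derives the $W_2$ simplification explicitly for deterministic policies rather than citing \autoref{thm:scaling}/\autoref{thm:general_metric}, and it silently ignores the $\mathrm{SND}_o(o)=0$ case that you flag (note your justification that $f(o)$ must vanish there is slightly circular, since it appeals to the pointwise identity you are in the process of proving; but the paper does not address this at all).
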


\begin{proof}
    We wish to show that the $\mathrm{SND}_I$ (\autoref{eq:snd_continuous}) computed with the analytical formulation of DiCo (\autoref{eq:finn_pi}) equals $\mathrm{SND}_\mathrm{des}$. 

    \begin{equation}
    \frac{2}{n(n-1)|\mathcal{O}|}\sum_{i=1}^n\sum_{j=i+1}^n \int_{\mathcal{O}}W_2 \left (\pi_h(o) + \frac{f(o)}{\mathrm{SND}_o(o)}\pi_{h,i}(o),\pi_h(o) + \frac{f(o)}{\mathrm{SND}_o(o)}\pi_{h,j}(o) \right ) \, do = \mathrm{SND}_\mathrm{des}.
    \label{eq:snd_cont_proof_1}
    \end{equation}

    First, we rewrite \autoref{eq:snd_cont_proof_1} with the definition of $W_2$ from \autoref{eq:wasserstein}. We focus on the case of deterministic policies. Proofs for the other cases can be obtained by following the steps in \autoref{sec:proofs}.

    \begin{equation}
    \frac{2}{n(n-1)|O|}\sum_{i=1}^n\sum_{j=i+1}^n \int_{\mathcal{O}}
    \sqrt{\left \|\left ( \mu_h(o) + \frac{f(o)}{\mathrm{SND}_o(o)}\mu_{h,i}(o)\right )-\left ( \mu_h(o) + \frac{f(o)}{\mathrm{SND}_o(o)}\mu_{h,j}(o)\right ) \right \|^2_2} \, do \\ = \mathrm{SND}_\mathrm{des}.
    \end{equation}

    The $\mu_h(o)$ terms cancel, allowing us to pull $f(o)$ and $\mathrm{SND}_o(o)$ out of the root. We also move the summations and leading constant term into the integral:

    \begin{equation}
    \frac{1}{|\mathcal{O}|}\int_{\mathcal{O}} f(o)
    \frac{\frac{2}{n(n-1)} \sum_{i=1}^n\sum_{j=i+1}^n \left \|\mu_{h,i}(o)-\mu_{h,j}(o) \right \|_2}{\mathrm{SND}_o(o)} \, do = \mathrm{SND}_\mathrm{des}.
    \end{equation}

    Note that the numerator of the fractional term is equivalent to $\mathrm{SND}_o(o)$, so the entire fraction cancels. We replace $\int_\mathcal{O} f(o) do$ with $F(o) \big\rvert_\mathcal{O}$, and multiply both sides by $|\mathcal{O}|$:

    \begin{equation}
    F(o) \big\rvert_\mathcal{O} = |\mathcal{O}| \, \mathrm{SND}_\mathrm{des}.
    \end{equation}

    We are left with the constraint that we imposed, which is guaranteed by FINN. Thus, the system is analytically constrained to the desired diversity.
\end{proof}

\begin{figure*}[h]
\begin{center}
\centerline{\includegraphics[width=\textwidth]{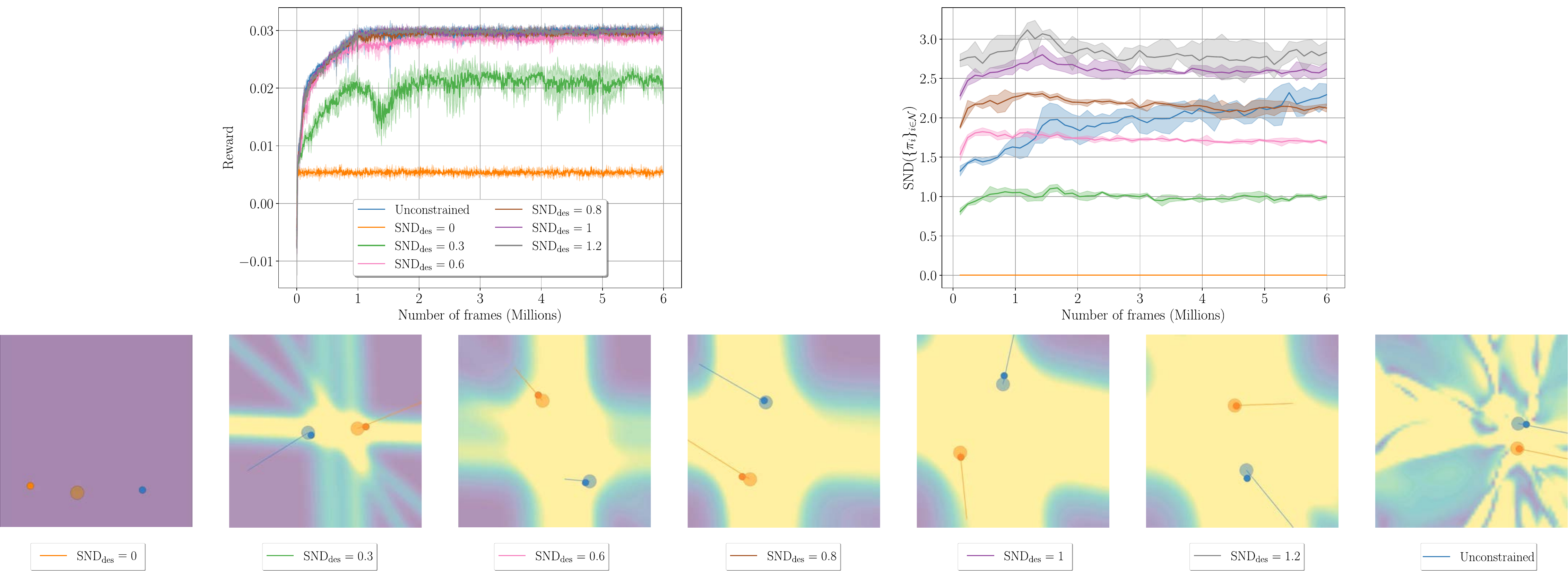}}
\caption{Illustration of the \textit{Multi-Agent Navigation} case study with analytical diversity control.  \textbf{Top left:} Mean instantaneous reward for agents trained with different desired diversities. \textbf{Top right:} $\mathrm{SND}(\left \{ \pi_{i} \right \}_{i \in \mathcal{N}})$ evaluated  for agents trained with different desired diversities. \textbf{Bottom:} Renderings of the diversity distribution over the observation space (colormap legend available in \autoref{fig:nav_case_study}) for agents trained with different desired diversities. Curves report mean and standard deviation for the MADDPG algorithm over 3 training seeds.}
\label{fig:nav_finn}
\end{center}
\end{figure*}

As a preliminary experiment, we train analytical DiCo on the \textit{Multi-Agent Navigation} task with heterogeneous goals. As shown in \autoref{fig:nav_finn}, the reward scales with increasing diversity, just as it does in the same task with standard DiCo. The visualizations show that the placement of the diversity is also similar to the standard approach, whereby the space between the goals is prioritized. Note that the value of the measured $\mathrm{SND}$ does not exactly match the given $\mathrm{SND}_\mathrm{des}$, as the $\mathrm{SND}_I$ constraint is guaranteed over the entire observation space, while the calculated $\mathrm{SND}$ is only computed over observations seen in rollouts. However, the \textit{relative} change in $\mathrm{SND}$ is consistent, with higher $\mathrm{SND}_\mathrm{des}$ leading to higher measured SND.

This experiment empirically demonstrates the ability of analytical DiCo to apply a neural diversity constraint across the entire observation space. This theoretical guarantee can be useful in offline RL applications, where it is imperative to generalize to states that were not visited during training. It also provides a new way of viewing SND: not as a metric that is computed over a set of observations, but rather as a fundamental property of the set of policies themselves.

\end{document}